\documentclass[a4paper,onecolumn,11pt,accepted=2023-04-25]{quantumarticle}
\pdfoutput=1
\usepackage[utf8]{inputenc}
\usepackage[english]{babel}
\usepackage[T1]{fontenc}
\usepackage{amssymb}
\usepackage{amsmath}
\usepackage{hyperref}
\usepackage{enumitem}
\usepackage{subcaption}

\usepackage{tikz}
\usepackage{lipsum}

\newtheorem{theorem}{Theorem}
\newtheorem{lemma}{Lemma}

\newtheorem{remark}{Remark}

\newenvironment{proof}{{\noindent\emph{Proof:}}}{$\hfill\Box$\vspace{.1in}}

\newcommand{\ket}[1]{|#1\rangle}
\newcommand{\bra}[1]{\langle#1|}

\newcommand{\bR}{\mathbb R}
\newcommand{\bE}{\mathbb{E}}
\newcommand{\E}{\mathbb{E}}
\newcommand{\cN}{\mathcal{N}}
\newcommand{\CZ}{\textit{CZ}}
\newcommand{\x}{\hat{x}}

\begin{document}

\title{Quantum Lazy Training}

\author{Erfan Abedi}
\affiliation{QuOne Lab, Phanous Research \& Innovation Centre, Tehran, Iran}

\author{Salman Beigi}
\affiliation{QuOne Lab, Phanous Research \& Innovation Centre, Tehran, Iran}

\author{Leila Taghavi}
\affiliation{QuOne Lab, Phanous Research \& Innovation Centre, Tehran, Iran}

\maketitle

\begin{abstract}
In the training of over-parameterized model functions via gradient descent, sometimes the parameters do not change significantly and remain close to their initial values. This phenomenon is called \emph{lazy training} and motivates consideration of the linear approximation of the model function around the initial parameters. In the lazy regime, this linear approximation imitates the behavior of the parameterized function whose associated kernel, called the \emph{tangent kernel}, specifies the training performance of the model. Lazy training is known to occur in the case of (classical) neural networks with large widths. In this paper, we show that the training of \emph{geometrically local} parameterized quantum circuits enters the lazy regime for large numbers of qubits. 
More precisely, we prove bounds on the rate of changes of the parameters of such a geometrically local parameterized quantum circuit in the training process, and on the precision of the linear approximation of the associated quantum model function; both of these bounds tend to zero as the number of qubits grows. 
We support our analytic results with numerical simulations.
\end{abstract}

\section{Introduction}

The goal of achieving near-term quantum advantages have put forward quantum machine learning as one of the main applications of Noisy  Intermediate-Scale Quantum (NISQ) devices~\cite{Preskill18}. A main paradigm for achieving quantum advantage in machine learning is via quantum variational algorithms~\cite{VQA21}. In this approach, a quantum circuit consisting of parameterized gates is learned in order to fit some training data. However, this learning process through which  optimal parameters of the circuit are found, faces  challenges in practice~\cite{McClean2018, Wang2021}, and needs a thorough exploration.

Gradient descent is one of the main methods for solving optimization problems, particularly for training the parameters of a quantum circuit for machine learning. In this method, the parameters are updated by moving in the opposite direction of the gradient of a loss function to be optimized. This updating of the parameters changes not only the value of the loss function, but also the function modeled by the quantum circuit. Thus, studying the evolution of the loss and model functions during the gradient descent algorithm is crucial in understanding variational quantum algorithms.

Approximating the gradient descent algorithm with its continuous version (gradient flow)~\cite{GradientFlow}, provides us with analytical tools for the study of the evolution of a function whose parameters are optimized via gradient descent. Writing down the evolution equation for this continuous approximation, we observe the appearance of a kernel function called \emph{tangent kernel} (see Section~\ref{sec:NTK}). In short, letting $f(\Theta, x)$ be our model function with $\Theta=(\theta_1, \dots, \theta_p)$ as the parameters (weights) of the model and $x$ as a data point on which we evaluate the function, the tangent kernel is defined by
\begin{align}
    K_\Theta(x, x') = \nabla_\Theta f(\Theta, x) \cdot \nabla_\Theta f(\Theta, x').
\end{align}
Here, $\nabla_\Theta f(\Theta, x)$ is the gradient of $f(\Theta, x)$ with respect to $\Theta$ and $\nabla_\Theta f(\Theta, x)\cdot \nabla_\Theta f(\Theta, x')$ is the inner product of the gradient vector for two data points $x, x'$. The tangent kernel at some initial point $\Theta_0$ can be thought of as the kernel associated with the \emph{linear approximation} of the function given by
\begin{align}\label{eq:lin-approx-intro}
f(\Theta,x) \simeq f \big( \Theta^{(0)}, x \big) + \nabla_\Theta f \big( \Theta^{(0)}, x \big) \cdot \big( \Theta - \Theta^{(0)} \big).
\end{align}

The tangent kernel for (classical) neural networks is called the 
\emph{Neural Tangent Kernel} (NTK). It is shown in~\cite{NTK18} that although the NTK depends on $\Theta$ which varies during the gradient descent algorithm, when the \emph{width} of the neural network is large compared to its depth, the NTK remains almost unchanged. In fact, for such neural networks, the parameters $\Theta$ remain very close to their initial value $\Theta^{(0)}$. This surprising phenomenon is called \emph{lazy training}~\cite{Chizat2019lazy}. 

In the \emph{lazy regime}, since $\Theta$ is close to its initialization $\Theta^{(0)}$, the linear approximation of the function in~\eqref{eq:lin-approx-intro} is accurate. In this case, the behavior of the function under training via gradient descent follows its linear approximation, and is effectively described by the tangent kernel at initialization.  We will review these results and related concepts in more detail in Section~\ref{sec:NTK}.

%%%%%%%%%%%%%%%%%%%%%%%%%%%

\paragraph{Our results:} Our main goal in this paper is to develop 
the theory of lazy training for parameterized quantum circuits as our model function, and to generalize the results of~\cite{NTK18} to the quantum case. We prove that when the number of qubits (analogous to the width of a classical neural network) in a quantum parameterized circuit is large compared to its depth, the associated model function can be approximated by a linear model. Moreover, we show that this linear model's behavior is similar to that of the original model under the gradient descent algorithm.  

To prove the above results, we need to put some assumptions on the class of parameterized quantum circuits. The results of~\cite{NTK18} in the classical case are proven by fixing all layers of a neural network but one, and sending the number of nodes (width) in that layer to infinity. In the quantum case, assuming that we neither introduce fresh qubits nor do we measure/discard qubits in the middle of the circuit, the number of qubits is fixed in all layers. Thus, in the quantum case, unlike~\cite{NTK18}, we cannot consider layers of the circuit individually and take their width (number of qubits) to infinity independently of other layers. To circumvent this difficulty, we put some restrictions on our quantum circuits:

\begin{itemize}
\item[(i)] We assume that the circuit is geometrically local and the entangling gates are performed on neighboring qubits. For example, we assume
the qubits are arranged on a 1D or 2D lattice and the two-qubit gates are applied only on pairs of adjacent qubits. More generally, we assume that the qubits are arranged on nodes of a \emph{bounded-degree} graph and that the two-qubit gates can be applied only on pair of qubits connected by an edge. We note that this assumption arguably holds in most proposed hardware architectures of realizable quantum computers.

\item[(ii)] We also assume that 
the observable which is measured at the end of the circuit is a \emph{local operator} with its locality being in terms of the underlying bounded-degree graph mentioned above. More precisely, we assume that the observable is a sum of terms, each of which acts only on a constant number of neighboring qubits. We will offer a number of evidences to show that our results do not hold without this assumption.

\end{itemize}

Given the above assumptions, we prove the followings:
\begin{enumerate}
\item To apply the gradient descent algorithm, we usually choose the initial parameters of the circuit at random. In Theorem~\ref{thm:concentration}, We show that when choosing the initial parameters \emph{independently} at random, the quantum tangent kernel \emph{concentrates} around its average as the number of qubits tends to infinity. This means that when the number of qubits is large, at first the tangent kernel is essentially independent of the starting parameters and is fixed.  

\item We also show, in Theorem~\ref{thm:lazy}, that when the number of qubits is large, lazy training occurs; meaning that 
the parameters of the circuit do not change significantly during the gradient descent algorithm and remain almost constant. This means that the tangent kernel is fixed not only at initialization, but also during the training. As a result and as mentioned above, our model function can be approximated by a linear model which shows a behavior similar to that of the original model during the training via gradient descent. 

\end{enumerate}

These results show that in order to analyze the training behaviour of parameterized quantum circuits with the aforementioned assumptions, we may only consider the linearized model. We note that the linearized model is determined by the associated tangent kernel, which assuming that the initial parameters are chosen independently at random, is concentrated around its average. Thus, the eigenvalues of the average tangent kernel determine the training behaviour of such parameterized quantum circuits. Based on this observation, we argue in Remark~\ref{rem:exp-training} that if these eigenvalues are far from zero, then the model is trained exponentially fast. We will comment on this result in compared to the no-go results about barren plateaus in Section~\ref{sec:conclusion}.

We also provide numerical simulations to support the above results.

%%%%%%%%%%%%%%%%%%%%%%%%%
\paragraph{Related works:} 

The subject of tangent kernels in the quantum case has been previously studied in a few works which we briefly review.

A tangent kernel for \emph{hybrid} classical-quantum networks is considered in~\cite{quantum-enhanced-NTK21}. We note, however, that in this work the quantum part of the model is fixed and parameter-free, and only the classical part of the network is trained.

The quantum tangent kernel is considered in~\cite{QTK21} for \emph{deep} parameterized quantum circuits. In this work, a \emph{deep circuit} is a circuit with a \emph{multi-layered data encoding} which alternates between data encoding gates and parameterized unitaries.
This data encoding scheme increases the expressive power of the model function. It is shown in~\cite{QTK21} that as the number of layers increases, the changes in circuit parameters decrease during the gradient descent algorithm (a signature behavior of lazy training), and the training loss vanishes more quickly. It is also shown that the tangent kernel associated to such deep quantum parameterized circuits can outperform \emph{conventional} quantum kernels, such as those discussed in~\cite{Schuld2019} and~\cite{Vojtech2019}. We note that all of these results are based solely on numerical simulations. Moreover, the simulations are performed only for 4-qubit circuits and do not predict the behaviour of the circuits in the large width limit. 

Quantum tangent kernel of parameterized quantum circuits (for both optimization and machine learning problems) is also studied in~\cite{representation-QNTK21}. In this work, \emph{without} exploring conditions under which lazy training occurs, it is shown that in the lazy training regime (or ``frozen limit"), the loss function decays exponentially fast.

Finally, tangent kernel for \emph{quantum states} is defined in~\cite{NQS21}, and based on numerical simulations, it is shown that it can be used in the study of the training dynamics of finite-width \emph{neural network quantum states}.  

We emphasize that the missing ingredient shared by these previous works is the absence of explicit conditions on the quantum models under which the training is \emph{provably} enters the lazy regime. This missing part is addressed in our work.

\medskip

\noindent
\emph{Note added.} After publishing our work,~\cite{Liuetal2203} and~\cite{LiuLinJiang22} have also been published that further explore lazy training in quantum machine learning.

%%%%%%%%%%%%%%%%%%%%%%%%%
\paragraph{Outline of the paper:} The rest of this paper is organized as follows. In Section~\ref{sec:NTK}, we review the notions of tangent kernel and lazy training in more detail. In Section~\ref{sec:QNN}, we describe quantum parameterized circuits and their training. We also explain in more detail the assumption of geometric locality mentioned above, and give an explicit example of such quantum circuits. Section~\ref{sec:results} is devoted to the proof of our main results regarding quantum lazy training. In Section~\ref{sec:simulations}, we support our analytic results with numerical simulations. Concluding remarks are discussed in Section~\ref{sec:conclusion}.

%%%%%%%%%%%%%%%%%%%%%%%%%%%%%%%%%%%%%%%%%%%%%%%%%%

\section{Tangent Kernel and Lazy Training}\label{sec:NTK}

In this section we briefly review the notion of a tangent kernel and explain the results of~\cite{NTK18} for classical neural networks.

Let $f(\Theta, x)$ be a model function which for any set of parameters $\Theta$, maps $\bR^d$ to $\mathbb{R}$. Having a training dataset $D=\{( x^{(1)}, y^{(1)} ),\ldots,( x^{(n)}, y^{(n)} )\},$ where $x^{(i)} \in \bR^d$ and $y^{(i)} \in \bR$, our goal is to find the best parameters $\Theta$ for which the outputs of our model $f(\Theta, x^{(i)})$ get close to the outputs provided in the dataset $y^{(i)}$ for all $i\in \{1, 2, \ldots, n\}$. To quantify this, we will need a metric to measure our model's ability to match our dataset. On that account, we make use of a \emph{loss function}, which in this paper is chosen to be the commonly used \emph{mean squared error} function:

\begin{equation}\label{eq:losss-func}
L(\Theta) = \frac{1}{n} \sum_{i=1}^{n} \, \frac{1}{2} \Big( f \big( \Theta, x^{(i)} \big) - y^{(i)} \Big)^2.
\end{equation}
Then, our goal is to find the optimal parameters that minimize the loss function:
\begin{equation}\label{eq:min-L}
\min_\Theta L(\Theta).
\end{equation}

We use the gradient descent algorithm to solve~\eqref{eq:min-L}. To this end, we randomly initialize parameters $\Theta = \Theta^{(0)}$ and in each step update them by moving in the opposite direction of the gradient of the loss function: $\Theta^{(t + 1)} = \Theta^{(t)} - \eta \nabla_{\Theta} L( \Theta^{(t)} )$, where $\eta$ is a fixed scalar called the \emph{learning rate} and $\nabla_{\Theta} L( \Theta^{(t)} )$ denotes the gradient of the loss function with respect to $\Theta$. This updating of parameters is repeated until a \emph{termination condition} is satisfied, e.g., the gradient vector $\nabla_{\Theta} L( \Theta^{(t)} )$ approaches zero, or the number of iterations reaches a maximum limit.

In order to analyze the gradient descent algorithm, we consider its continuous approximation. That is, we assume that the parameters are updated continuously via the gradient flow differential equation: 
$$\partial_t \Theta^{(t)} = -\nabla_{\Theta} L( \Theta^{(t)} ).$$
Then, the evolution of the model function computed at a data point $x$ is given by
\begin{align*}
\partial_t f \big( \Theta^{(t)}, x \big) &= -\nabla_{\Theta} L \big( \Theta^{(t)} \big) \cdot \nabla_{\Theta} f \big( \Theta^{(t)}, x \big)
\\ &=
-\frac{1}{n} \sum_{i=1}^{n} \Big( f \big( \Theta^{(t)}, x^{(i)} \big) - y^{(i)} \Big) \nabla_{\Theta} f \big( \Theta^{(t)}, x^{(i)} \big) \cdot \nabla_{\Theta} f \big( \Theta^{(t)}, x \big).
\end{align*}
This computation motivates the definition of the \emph{tangent kernel} as follows:
$$K_{\Theta} (x,x') = \nabla_{\Theta} f(\Theta, x) \cdot \nabla_{\Theta} f(\Theta, x').$$
We note that $K_{\Theta} (x,x')$ is a valid kernel function, since it is the inner product of two vectors. 
Then, we have
\begin{equation}\label{eq:f-der-kernel}
\partial_{t} f \big( \Theta^{(t)}, x \big) =
-\frac{1}{n} \sum_{i=1}^{n} \Big( f \big( \Theta^{(t)}, x^{(i)} \big) - y^{(i)} \Big) K_{\Theta^{(t)}} (x^{(i)}, x),
\end{equation}
The tangent kernel alone is enough to determine the evolution of the model function in the training process. 

\begin{figure}[t]
    \centering  
    \includegraphics[scale=0.8]{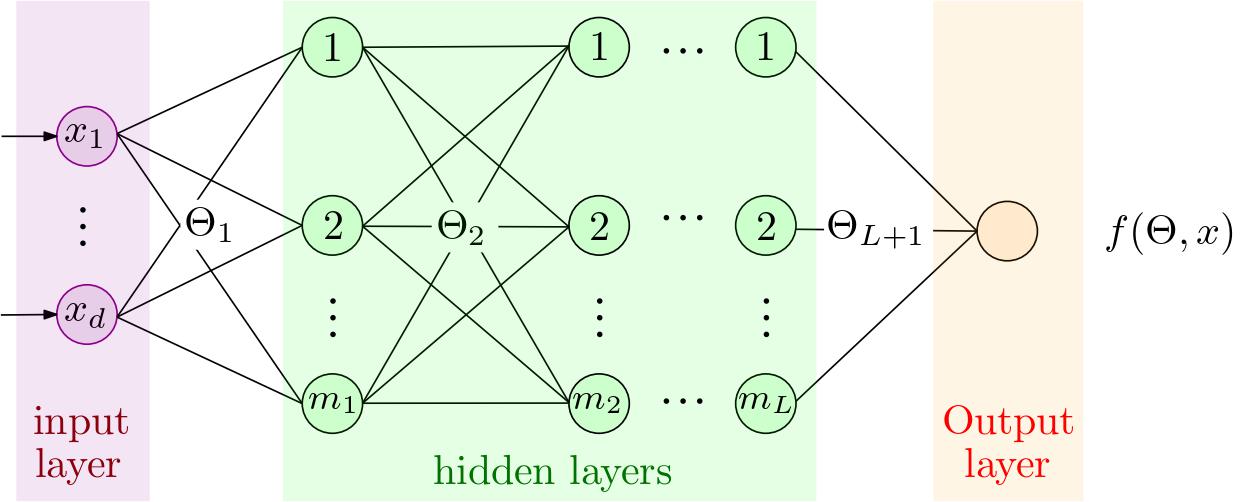}  
    \caption{ 
    A classical neural network with $L$ hidden layers. The input layer transmits a data-point $x=(x_1, \dots, x_d)\in \mathbb R^d$ to the first hidden layer through its outgoing edges like signals. Any edge of the network has a weight that transforms the passing signal.  Each node in the hidden layers applies a non-linear \emph{activation function} on its input signals and pass the result to the next layer. The output layer computes the model function $f(\Theta,x)$, where $\Theta$ denotes the set of all weights of the network. It is shown in~\cite{NTK18} that when $m_1, \dots, m_L\to \infty$ the model enters the lazy regime.
    } 
    \label{fig:CNN}
\end{figure}

Let us consider the case where $f(\Theta, x)$ comes from a neural network as in Figure~\ref{fig:CNN}. In this case, for instance, when there is only a \emph{single hidden layer}, the model function is given by
\begin{equation}\label{eq:nn-ModelFunc}
    f(\Theta, x) = \frac{1}{\sqrt m} \sum_{k=1}^{m} b_k \, \sigma \Big( \sum_{j=1}^{d} a_{kj} x_j \Big).
\end{equation}
Here, $m$ is the number of nodes in the hidden layer, $\Theta = (a_{kj} , b_{k}, 1 \leq j \leq d, 1 \leq k, \leq m)$ where $a_{kj}$ is the weight of the edge connecting $x_j$ to the $k$-th node of the hidden layer, and $b_k$ is the weight of the edge connecting the $k$-th node of the hidden layer to the output node. Moreover, $\sigma(\cdot)$ is a non-linear \emph{activation function}. Finally, following~\cite{NTK18} we introduce the normalization factor $\frac{1}{\sqrt m}$ in $f(\Theta, x)$ since we will consider the limit of this model function as $m$ tends to infinity.

When training such a neural network with a large width, i.e., large number of nodes in the hidden layers, it is observed that the initial parameters $\Theta^{(0)}$ do not change significantly, and $\Theta^{(t)}$ remains close to $\Theta^{(0)}$ until the gradient vector $\nabla_{\Theta} L( \Theta^{(t)} )$ approaches zero. This observation motivates the Taylor expansion of the model function at $\Theta^{(0)}$:
\begin{equation}\label{eq:lin-approx}
f(\Theta, x) \simeq f \big( \Theta^{(0)}, x \big) + \nabla_{\Theta} f \big( \Theta^{(0)}, x \big) \cdot \big( \Theta - \Theta^{(0)} \big).
\end{equation}
Observe that the right hand side is linear in $\Theta$ (but not in $x$). Indeed, it is a linear transformation after applying the \emph{feature map}  $x\mapsto \nabla_{\Theta} f( \Theta^{(0)}, x )$. Interestingly, the kernel function associated to this feature map is nothing but the tangent kernel $K_{\Theta^{(0)}} (x, x')$ associated to the neural network, and is called the \emph{neural tangent kernel}.

Based on the above observations, it is proven in~\cite{NTK18} that when the width of hidden layers in a neural network tends to infinity, it enters the \emph{lazy regime}, meaning that $\Theta^{(t)}$ remains close to $\Theta^{(0)}$ during the gradient descent algorithm. Moreover, it is proven that in this case, linear approximation of the model function as in~\eqref{eq:lin-approx} remains valid not only at initialization, but also during the entire training process. For more details on these results, particularly on the assumptions under which they hold, we refer to the original paper~\cite{NTK18}. We also refer to~\cite{Chizat2019lazy} for more details on lazy training.

%%%%%%%%%%%%%%%%%%%%%%%%%%%%%%%%%%%%%%%%%%%%%%%%%%%%%%

\section{Parameterized Quantum Circuits}\label{sec:QNN}

Parameterized quantum circuits are considered as the quantum counterpart of classical neural networks~\cite{FarhiNeven2018}.
Each parameterized quantum circuit amounts to a model function and similar to neural networks, can be trained to fit some data. 

As the name suggests, a parameterized quantum circuit is a circuit with some of its gates non-fixed and dependent on some parameters. Indeed, some gates of the circuit depend on parameters denoted by $\Theta$, and some gates \emph{encode} the input $x$. 
A measurement is performed at the end of the circuit which determines the output of computation. The measurement itself could also be parameterized, but in this work, for the sake of simplicity it is assumed to be fixed. See Figure~\ref{fig:LLayer} for an example of a parameterized circuit.

Letting $U(\Theta, x)$ be the unitary associated to the circuit, and $O$ be the observable measured at the end, the resulting model function is given by
\begin{equation}\label{eq:Q-model-Func}
f(\Theta, x) = \bra{0 \cdots 0} U^{\dagger}(\Theta, x) \, O \, U(\Theta, x) \ket{0 \cdots 0}.
\end{equation}
Then, having such a model function and a dataset $D = \big\{ (x^{(1)}, y^{(1)} ), \ldots, (x^{(n)}, y^{(n)}) \big\}$, with $x^{(i)} \in \bR^d$ and $y^{(i)} \in \bR$, we may try to find the optimal $\Theta$ that  
minimizes the loss function: 
\begin{equation}\label{eq:losss-func-Q}
L(\Theta) = \frac{1}{n} \sum_{i=1}^{n} \frac{1}{2} \Big( f \big( \Theta, x^{(i)} \big) - y^{(i)} \Big)^2.
\end{equation}
To this end, as before, we initialize the parameters $\Theta$ independently at random and move towards minimizing the value of this loss function by the way of gradient descent.   

We usually arrange gates of a parameterized circuit in \emph{layers}. For instance, the circuit of Figure~\ref{fig:LLayer} consists of an encoding layer of single-qubit ($Y$-rotation) gates and $L$ layers, each of which consists of some single-qubit ($X$-rotation) gates and some two-qubit (controlled-$Z$) gates. This layer-wise structure of parameterized circuits is crucial for us since in our results, we are going to fix the number of layers $L$, and consider the limit of large number of qubits ($m\to \infty$).

In this paper, for the stability of the model, we need to assume that the parameterized gates do not change significantly by a slight change in the parameters $\Theta$. To this end, we assume that
\begin{equation}\label{eq:bound-derivative-gate}
\Big\| \frac{\partial}{\partial \theta_j} U(\Theta, x) \Big\|, \Big\|\frac{\partial^2}{\partial \theta_{i} \theta_{j}} U(\Theta, x) \Big\| \leq c, \qquad \forall i, j
\end{equation}
for some constant $c > 0$. We note that this assumption holds in most parameterized circuits in the literature, particularly when the parameterized gates are Pauli rotation (see equation~\eqref{eq:def-Rx} below).

\begin{figure}
    \centering  
    \includegraphics[scale=0.8]{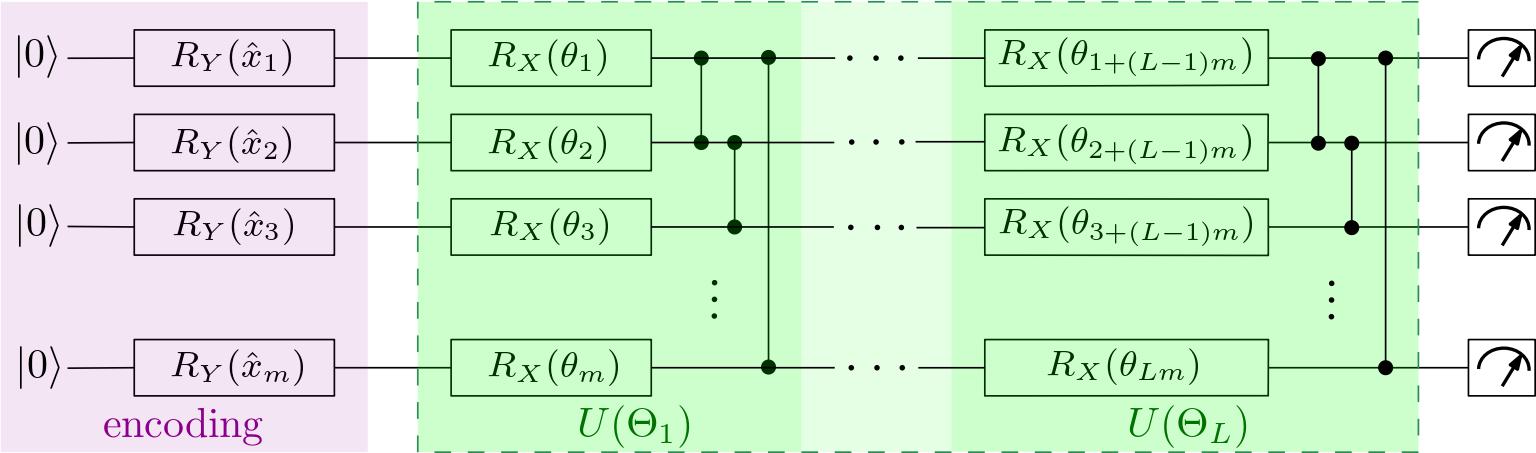}  
    \caption{
    A quantum parameterized circuit with $L$ layers of parameterized gates. Here, all the qubits are initialized at $\ket 0$, and then a layer of $Y$-rotations (i.e., $R_Y(\hat{x_{j}})= \exp{ (-i \dfrac{ \hat{ x_{j} } }{2} Y) }$ ) is applied to \emph{encode} the input $x$, where $\hat x_1, \dots, \hat x_m$ are functions (e.g., coordinates) of $x$. Next, $L$ layers of parameterized gates are applied. We assume that only the single-qubit gates are parameterized and fix the entangling gates to controlled-$Z$ gates. We assume that the qubits are arranged on a cycle, and the controlled-$Z$ in each layer are applied on all pairs of neighboring qubits.  
    }
    \label{fig:LLayer}
\end{figure}

\paragraph{Geometrically local circuits:}
As mentioned in the introduction, to prove our result we need to restrict the class of circuits to geometrically local ones. To this end, we assume that the qubits are arranged on vertices of a \emph{bounded-degree} graph (e.g., 1D or 2D lattice) and the entangling 2-qubits gates in the circuit are applied only on pairs of neighboring qubits. 
For instance, in the circuit of Figure~\ref{fig:LLayer}, we assume that the qubits are arranged on a cycle, and the controlled-$Z$ gates in each layer are applied only on pairs of neighboring qubits.

We also assume that the observable $O$ that is measured at the end of the circuit is a geometrically local one. More precisely, we assume that $O$ is given by
\begin{equation} \label{eq:observable-local}
O=\frac{1}{\sqrt m} \sum_{k=1}^{m} O_k,
\end{equation}
where $m$ is the number qubits in the circuit, and 
$O_k$ is an observable acting on the $k$-th qubit and possibly on  a constant number of qubits in its neighborhood, with $\|O_k\| \leq 1$. Moreover, as in the classical case (see, equation~\eqref{eq:nn-ModelFunc}), we introduce the normalization factor $\frac{1}{\sqrt{m}}$ in $O$ since we are considering the limit of $m \to \infty$.  
In this case, the model function~\eqref{eq:Q-model-Func} can be written as
\begin{equation}\label{eq:observable-local-MF}
    f(\Theta, x) = \frac{1}{\sqrt m} \sum_{k=1}^{m} f_{k} (\Theta, x),
\end{equation}
where
\begin{equation}\label{eq:fkx}
   f_k(\Theta, x) = \bra{0\cdots 0} U^{\dagger}(\Theta, x) \, O_k \, U(\Theta, x) \ket{0\cdots 0}.
\end{equation}

We emphasize that the assumption of geometric locality on the quantum circuit described above holds in most quantum hardware architectures. After all, the qubits in the quantum hardware should be arranged on some lattice, and usually the 2-qubits gates can only be applied on neighboring qubits. However, the assumption that the observable is geometrically local is not justified by the hardware architecture. Nevertheless, global observables usually result in \emph{barren plateaus} and a way of avoiding them is to use local observables~\cite{Localcost21}. Moreover, as
our simulations in Section~\ref{sec:simulations} show, our results do not hold for global observables. Thus, we have to somehow restrict the class of observables.

\paragraph{Example:} 
We finish this section by explaining the example of Figure~\ref{fig:LLayer} in more detail, since it will be used as our quantum circuit for simulations. 

First, we note that our data points $(x^{(i)}, y^{(i)})$ belong to $\bR^d \times \bR$, so in the circuit we need to encode each input $x$ in an $m$-qubit circuit. In the circuit of Figure~\ref{fig:LLayer} we assume that we first map $x\in \bR^d$ to some $\hat{x} \in \bR^m$ and then use the coordinates of $\hat{x}$  in the encoding layer of the circuit. The mapping $x \mapsto \hat{x}$ is arbitrary and can even be non-linear. However, for our numerical simulations we use the map:
\begin{equation}\label{eq:C-encoding}
\hat{x_{j}} = x_{j \text{ mod } d}, \quad 1 \leq j \leq m.
\end{equation}
Then, the coordinates of $\hat{x}$ are used to encode $x$ in the first layer:
\begin{equation*}
U_{\text{enc}} (x) = \prod_{j=1}^{m} R_{Y_j} (\x_j) = \prod_{j=1}^{m} \exp{ \Big( -i \frac{\x_j}{2} Y_{j} \Big) },
\end{equation*}
where $Y_j$ denotes the Pauli-$Y$ matrix acting on the $j$-th qubit.

Next, we apply $L$ parameterized unitaries $U(\Theta_{1}), \dots, U(\Theta_{L})$ where 
\begin{align*}
U(\Theta_{\ell}) = \prod_{k=1}^{m} \CZ_{k, k+1} \prod_{j=1}^{m} R_{X_j} ( \theta_{(\ell-1) m + j} ),
\end{align*}

where
\begin{equation}\label{eq:def-Rx}
R_X(\theta) = \exp{ \Big( -i \frac{\theta}{2} X \Big) },
\end{equation}
and $\CZ_{k, k+1}$ is the controlled-$Z$ gate applied on qubits $k, k+1$. Here, we assume that the qubits are arranged on a cycle, and the indices are modulo $m$.

With this specific structure for the parameterized circuit, we have
$$U(\Theta, x) = U(\Theta) U_{\text{enc}} (x) = U(\Theta_L) \cdots U(\Theta_{1}) U_{\text{enc}}(x).$$
Nevertheless, we emphasize that in this paper we do \emph{not} assume that the encoding part of the circuit is only restricted to the first layer; our results are valid even if there are gates in the middle of the circuit that encode $x$, see~\cite{datareuploading, EncodingSchuld}.

Finally, we assume that the observable is given by
$$O = \frac{1}{\sqrt m} (Z_{1} + \ldots + Z_m),$$
where $Z_{k}$ is the Pauli-$Z$ operator acting on the $k$-th qubit. Hence, the model function associated to this parameterized circuit is equal to
\begin{align}\label{eq:f-example}
f(\Theta, x) = \frac{1}{\sqrt m} \sum_{k=1}^{m} \bra{0 \cdots 0} U^{\dagger}_{\text{enc}}(x) U^\dagger(\Theta) \, Z_k \, U(\Theta) U_{\text{enc}}(x) \ket{0\cdots 0}.
\end{align}

A crucial observation which will be frequently used in our proofs is that each term in the above sum depends only on constantly many parameters (independent of $m$, the number of qubits). First, note that the last layer of controlled-$Z$ gates does not affect the model function since the controlled-$Z$ gates are diagonal in the $Z$-basis and commute with the observable. Second, and more importantly, the result of the measurement of the $k$-th qubit depends only on the \emph{light cone} of this qubit. To clarify this, let us assume that $L=2$.  In this case, the result of the measurement of the $k$-th qubit depends only on parameters $\theta_{k-1}, \theta_{k}, \theta_{k+1}, \theta_{m+k}$, see Figure~\ref{fig:lightCone}. The point is that, when $L=2$, we have 
\begin{align}
U^{\dagger}_{\text{enc}} (x) U^\dagger(\Theta)Z_k U(\Theta) U_{\text{enc}}(x) 
& = R^{\dagger}_{Y_{k-1}}(\x_{k-1}) R^{\dagger}_{Y_{k}}(\x_{k}) R^{\dagger}_{Y_{k+1}}(\x_{k+1}) \nonumber \\ 
& ~\quad R^{\dagger}_{X_{k-1}}(\theta_{k-1}) R^{\dagger}_{X_{k}}(\theta_{k}) R^{\dagger}_{X_{k+1}}(\theta_{k+1}) \nonumber\\
& ~\quad \CZ_{k-1, k} \CZ_{k, k+1} R^{\dagger}_{X_{k}}(\theta_{m + k}) \nonumber \\
& ~\quad Z_{k} \nonumber\\
& ~\quad R_{X_{k}}(\theta_{m + k}) \CZ_{k, k+1} \CZ_{k-1, k} \nonumber\\
& ~\quad R_{X_{k+1}}(\theta_{k+1}) R_{X_{k}}(\theta_{k}) R_{X_{k-1}}(\theta_{k-1}) \nonumber\\
& ~\quad R_{Y_{k+1}}(\x_{k+1}) R_{Y_{k}}(\x_{k}) R_{Y_{k-1}}(\x_{k-1}). \label{eq:Ex-Geom-Local}
\end{align}
Thus, the $f(\Theta, x)$ given by~\eqref{eq:f-example} with $L = 2$ is a sum of $m$ terms whose $k$-th term depends on $\theta_{k-1}, \theta_{k}, \theta_{k+1}$ and $\theta_{m+k}$, which together make the light cone of the $k$-th qubit (as depicted in Figure~\ref{fig:lightCone}).

\begin{figure}
    \centering  
    \includegraphics[scale=.8]{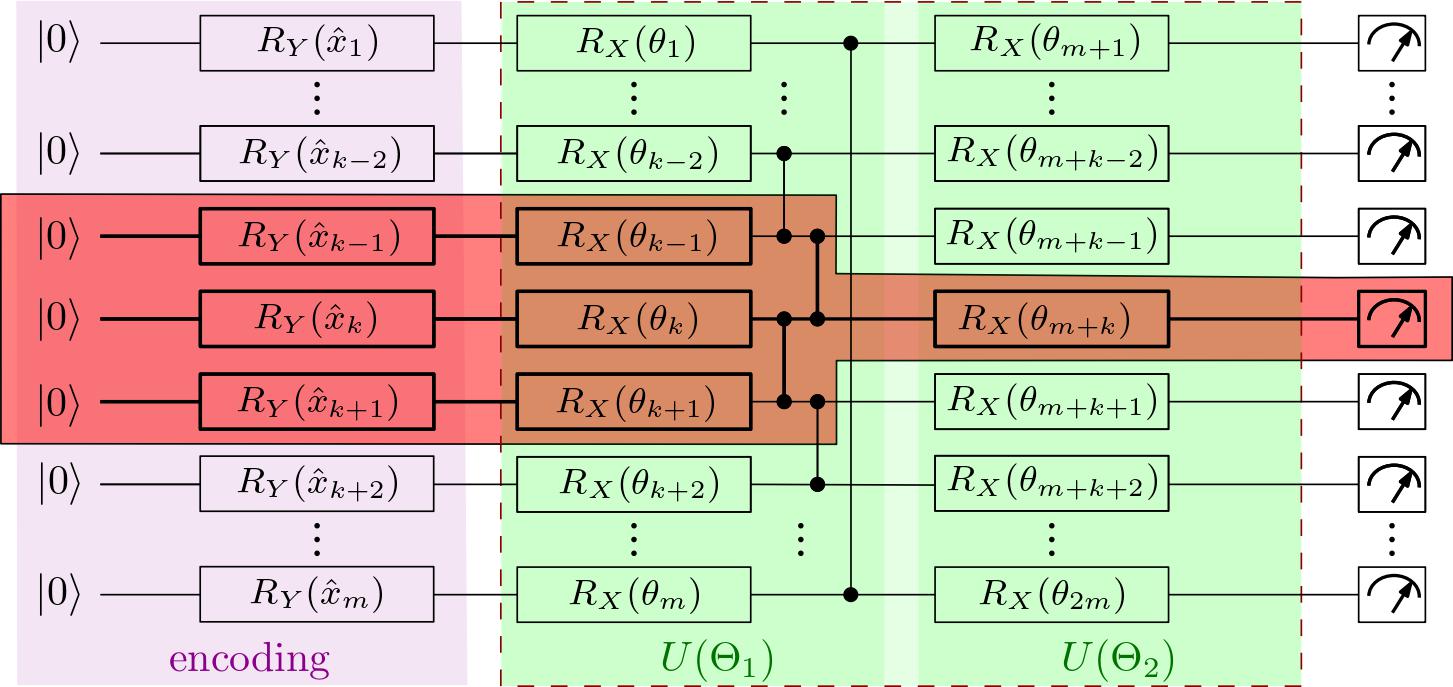}  
    \caption{
    The light cone of the $k$-th qubit of the parameterized circuit of Figure~\ref{fig:LLayer} with $L=2$ is depicted in red. This means that in order to compute the result of the $k$-th measurement $Z_{k}$, we only need to compute the red part of the circuit and ignore the rest.  We note that only the parameters $\theta_{k-1}, \, \theta_{k}, \, \theta_{k+1}$ and $\theta_{m+k}$ appear in this light cone. 
    }
    \label{fig:lightCone} 
\end{figure}

\section{Main results}\label{sec:results}

This section contain the proof of our results. We first show that under certain conditions, when the parameters are initialized independently at random, the tangent kernel is concentrated around its mean.

\begin{theorem}\label{thm:concentration}
Let $f(\Theta, x)$ be a model function associated to a \emph{geometrically local} parameterized quantum circuit on $m$ qubits as in~\eqref{eq:Q-model-Func} with $\Theta=(\theta_1, \dots, \theta_p)$ satisfying~\eqref{eq:bound-derivative-gate}. Suppose that the observable $O$ is also geometrically local given by~\eqref{eq:observable-local} where $O_{k}$ acts on the $k$-th qubit and possibly on a constant number qubits in its neighborhood, and satisfies $\|O_{k}\| \leq 1$. In this case the model function is given by~\eqref{eq:observable-local-MF} and~\eqref{eq:fkx}. Suppose that $\theta_{1}, \dots, \theta_{p}$ are chosen independently at random. Then, for any $x, x' \in \bR^d$ we have 
\begin{equation}\label{eq:concentration-result}
\Pr \Big[ \big| K_{\Theta}(x, x') - \bE[ K_{\Theta}(x,x') ] \big| \geq \epsilon \Big] \leq \exp{ \Big( -\Omega \big( \frac{m^{2} \epsilon^{2}}{pc^4} \big) \Big) }.
\end{equation}
\end{theorem}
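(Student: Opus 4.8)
The plan is to apply McDiarmid's bounded differences inequality to $K_\Theta(x,x')$ viewed as a function of the independent random parameters $\theta_1,\dots,\theta_p$. Writing $g_j(x):=\partial_{\theta_j} f(\Theta,x)=\frac{1}{\sqrt m}\sum_{k=1}^m \partial_{\theta_j} f_k(\Theta,x)$, we have $K_\Theta(x,x')=\sum_{j=1}^p g_j(x)\,g_j(x')$, so it suffices to control how much a single coordinate $\theta_i$ can move this sum, uniformly over all values of the parameters.

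First I would exploit geometric locality to set up a light-cone bookkeeping. Because the underlying interaction graph has bounded degree and the circuit has a fixed number of layers $L$, the conjugated observable $U^\dagger(\Theta,x)\,O_k\,U(\Theta,x)$ depends only on the parameters lying in the constant-size backward light cone of the $k$-th qubit; equivalently, $\partial_{\theta_j} f_k$ vanishes unless $\theta_j$ lies in that light cone. Dually, each parameter $\theta_j$ appears in at most $\kappa=O(1)$ of the terms $f_k$. Combined with assumption~\eqref{eq:bound-derivative-gate} and $\|O_k\|\le 1$, the product rule gives $|\partial_{\theta_j}f_k|\le 2c$, so each $g_j(x)$ is a sum of at most $\kappa$ nonzero terms of magnitude $\le 2c/\sqrt m$, whence $|g_j(x)|\le 2c\kappa/\sqrt m=O(c/\sqrt m)$ uniformly in $\Theta$ and $x$.

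Next I would bound the bounded-difference constants $d_i$. The key point is that $g_j(x)$ is itself a local quantity: it depends only on parameters lying in a constant-size two-hop neighborhood $N(i)$ of $\theta_j$ in the light-cone structure, so perturbing a single $\theta_i$ alters $g_j(x)$ for at most $O(1)$ indices $j$. For each such $j$, writing the difference of products as $g_j(x)\,\Delta g_j(x')+g_j(x')\,\Delta g_j(x)$ and using the uniform bounds $|g_j|\le 2c\kappa/\sqrt m$ together with $|\Delta g_j|\le 4c\kappa/\sqrt m$ shows that each affected term of $K_\Theta(x,x')$ changes by $O(c^2/m)$. Summing over the $O(1)$ affected indices yields $d_i=O(c^2/m)$, with the hidden constant depending only on the degree bound and $L$, and hence $\sum_{i=1}^p d_i^2=O(pc^4/m^2)$.

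Finally, McDiarmid's inequality gives $\Pr\big[\,|K_\Theta(x,x')-\bE[K_\Theta(x,x')]|\ge\epsilon\,\big]\le 2\exp\big(-2\epsilon^2/\sum_i d_i^2\big)=\exp\big(-\Omega(m^2\epsilon^2/(pc^4))\big)$, as claimed; note that the uniform gradient bounds make the difference property hold for every value of the parameters, so no assumption on the initialization law beyond independence is required. I expect the main obstacle to be the locality bookkeeping of the third paragraph, namely arguing precisely that a single parameter touches only $O(1)$ of the products $g_j(x)\,g_j(x')$. This rests on the fact that both the forward light cone (a parameter influences only $O(1)$ qubits) and the dependency set of $g_j$ (a two-hop light cone) remain constant-sized as $m\to\infty$; once this is established, everything downstream is a routine application of bounded differences.
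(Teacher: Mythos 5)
Your proposal is correct and follows essentially the same route as the paper: exploit geometric locality to show that each parameter $\theta_i$ influences only $O(1)$ of the (uniformly $O(c^2/m)$-bounded) summands of $K_\Theta(x,x')$, conclude a bounded-difference constant $d_i=O(c^2/m)$, and apply McDiarmid's inequality to get $\exp(-\Omega(m^2\epsilon^2/(pc^4)))$. The only cosmetic difference is bookkeeping — you group the sum as $\sum_j g_j(x)g_j(x')$ while the paper sums over triples $(k,k',j)$ with $j\in\cN_k\cap\cN_{k'}$ — but the locality argument and the final bound are identical.
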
 

\medskip

\begin{remark}
We note that usually, the number of parameters in each layer of a circuit is linear in the number of qubits. Then, assuming that the number of layers $L$ is constant, $p = O(Lm) = O(m)$. In this case, the right hand side of~\eqref{eq:concentration-result} vanishes exponentially fast in $m$.
\end{remark}

As mentioned in the previous section, our main tool in proving this theorem is the geometric locality of the circuit and the observable. Based on this, following similar computations as in~\eqref{eq:Ex-Geom-Local}, we find that each term $f_k(\Theta, x)$ of the model function depends only on constantly many parameters.

In the proof of this theorem of also use McDiarmid's inequality.

\begin{lemma}[McDiarmid's Concentration Inequality~\cite{McDiarmid}] \label{lem:McDiarmid}
Let $X_{1} \ldots, X_{n}$ be independent random variables, each with values in $\mathcal{X}$. Let $f:\mathcal{X}^{n} \to \mathbb{R}$ be a mapping such that for every $i\in \{1, 2, \ldots, n\}$ and every $(x_{1}, \ldots, x_n), (x'_{1}, \ldots, x'_n) \in \mathcal{X}^{n}$ that differ only in the $i$-th coordinate (i.e., $x_{i} \neq x'_{i}$ and $\forall j \neq i: \; x_{j}=x'_{j}$),
\begin{equation*}
    | f(x_1, \ldots, x_n) - f(x'_{1}, \ldots, x'_{n}) | \leq c_{i}.
\end{equation*}
Then for any $\epsilon > 0$
\begin{equation*}
    \mathbb{P} \Big( f(X_{1}, \ldots, X_{n}) - \mathbb{E}[f(X_{1} \ldots, X_n)] \geq \epsilon \Big) \leq \exp{ \left( -\frac{2\epsilon^2}{\sum_{i=1}^{n} c_{i}^{2}} \right) }.
\end{equation*}
\hfill$\Box$
\end{lemma}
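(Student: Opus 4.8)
The plan is to prove the inequality by the standard martingale (Azuma--Hoeffding) method: build the Doob martingale associated to $f$ and control its increments via the bounded-differences hypothesis. First I would define, for $i=0,1,\ldots,n$, the conditional expectations $Z_i=\mathbb{E}[f(X_1,\ldots,X_n)\mid X_1,\ldots,X_i]$, so that $Z_0=\mathbb{E}[f]$ is constant and $Z_n=f(X_1,\ldots,X_n)$. The sequence $(Z_i)$ is a martingale with respect to the filtration generated by $X_1,\ldots,X_i$, and telescoping gives $f-\mathbb{E}[f]=\sum_{i=1}^n D_i$ with $D_i=Z_i-Z_{i-1}$ and $\mathbb{E}[D_i\mid X_1,\ldots,X_{i-1}]=0$.

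The key step is to show that each increment $D_i$, conditioned on the past $X_1,\ldots,X_{i-1}$, is confined to an interval of length at most $c_i$. To this end I would introduce
$$g_i(x)=\mathbb{E}[f(X_1,\ldots,X_n)\mid X_1,\ldots,X_{i-1},X_i=x],$$
and set $L_i=\inf_x g_i(x)$ and $U_i=\sup_x g_i(x)$, taken over the support of $X_i$. Because the $X_j$ are independent, $g_i(x)$ is obtained by integrating $f$ with $X_i$ fixed to $x$ against the product law of $X_{i+1},\ldots,X_n$, a law that does not depend on $x$. Hence for any two values $x,x'$ the difference $g_i(x)-g_i(x')$ is an average of $f(\ldots,x,\ldots)-f(\ldots,x',\ldots)$ over the same distribution, which the bounded-differences hypothesis bounds pointwise by $c_i$. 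Therefore $U_i-L_i\le c_i$, and since $D_i=g_i(X_i)-Z_{i-1}$ lies between $L_i-Z_{i-1}$ and $U_i-Z_{i-1}$, it takes values in an interval of width at most $c_i$.

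With the increments controlled, I would apply Hoeffding's lemma conditionally: any centred random variable supported on an interval of length $\ell$ has conditional moment generating function bounded by $e^{\lambda^2\ell^2/8}$, so $\mathbb{E}[e^{\lambda D_i}\mid X_1,\ldots,X_{i-1}]\le e^{\lambda^2 c_i^2/8}$ for every $\lambda>0$. Iterating the tower property from the outermost conditioning inward, peeling off one factor at a time, yields $\mathbb{E}[e^{\lambda(f-\mathbb{E}[f])}]\le \exp\big(\tfrac{\lambda^2}{8}\sum_{i=1}^n c_i^2\big)$. The exponential Markov inequality then gives $\mathbb{P}(f-\mathbb{E}[f]\ge\epsilon)\le \exp\big(-\lambda\epsilon+\tfrac{\lambda^2}{8}\sum_{i=1}^n c_i^2\big)$, and optimizing over $\lambda$, namely $\lambda=4\epsilon/\sum_{i=1}^n c_i^2$, produces the claimed bound $\exp\big(-2\epsilon^2/\sum_{i=1}^n c_i^2\big)$.

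The main obstacle is the second step: correctly extracting bounded martingale increments from the bounded-differences hypothesis. This is exactly where independence is indispensable, since it is what makes the conditional law of the future variables independent of the current value $X_i$, so that comparing $g_i(x)$ and $g_i(x')$ reduces to integrating a pointwise $c_i$-bounded difference rather than comparing two genuinely different measures. Everything afterward, namely Hoeffding's lemma and the optimization over $\lambda$, is routine.
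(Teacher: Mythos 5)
The paper gives no proof of this lemma at all: it is quoted as a known result from the cited reference~\cite{McDiarmid} and closed with a $\Box$, so there is no internal argument to compare against. Your proposal is the canonical proof of the bounded-differences inequality --- the Doob martingale $Z_i=\mathbb{E}[f\mid X_1,\ldots,X_i]$, the use of independence to show each increment $D_i$ is conditionally confined to an interval of length $c_i$ (correctly identified as the one step where independence is essential), the conditional Hoeffding lemma bound $e^{\lambda^2 c_i^2/8}$ peeled off via the tower property, and the Chernoff optimization at $\lambda=4\epsilon/\sum_{i=1}^n c_i^2$ --- and every step, including the final arithmetic yielding $\exp\left(-2\epsilon^2/\sum_{i=1}^n c_i^2\right)$, is correct; this is essentially the argument in McDiarmid's original survey, so your proof faithfully supplies exactly what the citation points to.
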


\medskip
 
\noindent
\emph{Proof of Theorem~\ref{thm:concentration}:}
Let $\cN_{k}$ be the set of indices $ j $ with $1 \leq j \leq p$ such that $U^{\dagger}(\Theta, x) \, O_k \, U(\Theta, x)$ depends on $\theta_{j}$. In other words, $\Theta_{\cN_{k}}$ is the set of $\theta_{j}$'s in the light cone of the $k$-th observable $O_{k}$. Then, we have 
\begin{align*}
f_{k}(\Theta, x) & = \bra{0 \cdots 0} U^{\dagger}(\Theta, x) \, O_k \, U(\Theta, x) \ket{0\cdots 0} \\
& = \bra{0 \cdots 0} U^{\dagger}(\Theta_{\cN_{k}}, x) \, O_k \, U(\Theta_{\cN_{k}}, x) \ket{0 \cdots 0} \\
& = f_{k}(\Theta_{\cN_{k}}, x).
\end{align*}
We note that by the assumption of geometric locality, we have $|\cN_k|=O(1)$.

Now, by the definition of the tangent kernel we have 
\begin{align} \label{eq:kxx'}
	K_{\Theta}(x, x')
	& = \nabla_{\Theta} f(\Theta, x) \cdot \nabla_{\Theta} f(\Theta, x') \nonumber \\
	& = \frac{1}{m} \sum_{k, k'=1}^{m} \sum_{j=1}^{p} \frac{\partial}{\partial \theta_{j}} f_{k}(\Theta_{\cN_{k}}, x) \cdot \frac{\partial}{\partial \theta_{j}} f_{k' (\Theta_{\cN_{k'}}, x')} \nonumber \\
	& = \frac{1}{m} \sum_{k, k'=1}^{m} \sum_{j \in \cN_{k} \cap \cN_{k'}} \frac{\partial}{\partial \theta_{j}} f_{k}(\Theta_{\cN_{k}}, x) \cdot \frac{\partial}{\partial \theta_j} f_{k'}(\Theta_{\cN_{k'}}, x'),
\end{align}
where the last equation follows since $\frac{\partial}{\partial \theta_{j}} f_{k}(\Theta_{\cN_{k}}, x) = 0$ for any $j \notin \cN_{k}$.
 
Let 
\begin{equation}\label{eq:def-set-Gamma}
\Gamma := \{(k, k', j): j \in \cN_{k} \cap \cN_{k'}\}.
\end{equation}
We note that since $O_k$ acts only on a constant number of qubits in the neighborhood of the $k$-th qubit, $\cN_{k}$ intersects $\cN_{k'}$ only if the qubits $k$ and $k'$ are geometrically close to each other (in the underlying graph). Then, since the underlying graph has a bounded degree, $\cN_k$ intersects only a constant number of $\cN_{k'}$'s. On the other hand, the size of $\cN_{k}$ is constant. Thus, for each $k$ the number of triples $(k, k', j)$ in $\Gamma$ is constant, and we have
$|\Gamma| = O(m)$.

Next, let  
$$ T_{k, k',j} = T_{k, k', j}(\Theta) = \frac{\partial}{\partial \theta_j} f_{k}(\Theta_{\cN_{k}}, x) \cdot \frac{\partial}{\partial \theta_j} f_{k'}(\Theta_{\cN_{k'}}, x').$$ 
Then,
\begin{equation*}
K_{\Theta}(x, x')=\frac{1}{m} \sum_{(k, k', j) \in \Gamma} T_{k, k', j},
\end{equation*}
can be thought of as a normalized sum of $O(m)$ terms. Note that these terms are not independent of each other; each parameter $\theta_{j}$ may appear in more than one term. Nevertheless, again by the assumption of geometric locality, each $\theta_{j}$ appears in at most constantly many terms. Therefore, by letting $\Theta, \Theta'$ be two tuples of parameters differing only at the $j$-th position (i.e., $\theta_{i} = \theta'_{i}$ for all $i \neq j$), we get
\begin{align*}
|K_{\Theta}(x, x') - K_{\Theta'}(x, x')| 
& \leq \frac{1}{m} \sum_{(k, k'):\, (k, k', j) \in \Gamma} | T_{k, k', j}(\Theta)  -  T_{k, k', j}(\Theta') | \\
& \leq \frac{1}{m} \sum_{(k, k'):\, (k, k', j) \in \Gamma} | T_{k, k', j}(\Theta)| + |T_{k, k', j}(\Theta') | \\
& = O \Big( \frac{c^2}{m} \Big),
\end{align*}
where in the last line we use~\eqref{eq:bound-derivative-gate} and the fact that for each $j$, the number of triples $(k, k', j)$ in $\Gamma$ is constant. Then, by McDiarmid's concentration inequality~\cite{McDiarmid} we have 
%(See Lemma~\ref{lem:McDiarmid}) 
\begin{equation*}
\Pr\Big[ \big| K_{\Theta}(x, x') - \bE[K_{\Theta}(x, x')] \big| \geq \epsilon \Big] \leq 2 \exp{ \Big( -\frac{2 \epsilon^{2}}{p \, O({c^4} / {m^2}) } \Big) } =\exp {\Big(-\Omega \big( \frac{ m^{2} \epsilon^{2}}{pc^{4}} \big) \Big) }.
\end{equation*}
\hfill $\Box$

\medskip

The above theorem says that even though the parameters are chosen randomly at initialization, the tangent kernel is essentially fixed. This results in an essentially fixed linearized model via~\eqref{eq:lin-approx}.

The following theorem states our second main result, that the training of geometrically local quantum circuits over large number of qubits enters the lazy regime and can be approximated by a linear model.

\begin{theorem}\label{thm:lazy}
Let $f(\Theta, x)$ be a model function associated with a parameterized quantum circuit satisfying the assumptions of Theorem~\ref{thm:concentration}. Suppose that a data set $D = \big\{ (x^{(1)}, y^{(1)} ), \ldots, (x^{(n)}, y^{(n)}) \big\}$, with $x^{(i)} \in \bR^d$ and $y^{(i)} \in \bR$ is given. Assume that at initialization we choose $\Theta^{(0)} = \big(\theta_{1}^{(0)}, \dots, \theta_p^{(0)} \big)$ \emph{independently} at random, and apply the gradient flow to update the parameters in time by $\nabla \Theta^{(t)} = - \nabla_{\Theta} L(\Theta^{(t)})$, where $L(\Theta)$ is given in~\eqref{eq:losss-func-Q}. 
Then, the followings hold:
\begin{itemize}
\item[{\rm (i)}] For any $1\leq j\leq p$ we have
$$\big| \partial_{t} \theta_{j}^{(t)} \big| = O \left( \sqrt{ \frac{L \big( \Theta^{(0)} \big)}{m} } \right).$$

\item[{\rm (ii)}] For any $x, x'$ we have
$$\big|\partial_{t} K_{\Theta^{(t)} } (x, x') \big| = O\left( \sqrt{ \frac{L \big( \Theta^{(0)} \big)}{m} } \right).$$

\item[{\rm (iii)}] 
Let $\bar{f}( \bar{\Theta}, x)$ be the function associated to the linearized model, i.e.,
$$\bar f(\bar \Theta ,x) = f\big( \Theta^{(0)}, x \big) + \nabla_{\Theta} f\big( \Theta^{(0)}, x \big) \cdot \big( \bar{\Theta} - \Theta^{(0)} \big).$$
Suppose that we start with $\bar{\Theta}^{(0)} = \Theta^{(0)}$, and train the linearized model with its associated loss function denoted by $\bar{L}( \bar{\Theta}^{(t)} )$ which results in  
$$\partial_{t} \bar{f}\big( \bar{\Theta}^{(t)}, x \big)=
-\frac{1}{n} \sum_{i=1}^{n} \Big( \bar{f} \big( \bar{\Theta}^{(t)}, x^{(i)} \big) - y^{(i)} \Big) K_{\Theta^{(0)}}(x^{(i)}, x).
$$
Let 
$$\Delta(t) = \bigg(  \frac{1}{n} \sum_{i=1}^{n} \Big( f(\Theta^{(t)}, x^{(i)}) - \bar{f}( \bar{\Theta}^{(t)}, x^{(i)}) \Big)^{2} \bigg)^{1/2}.$$
Then, for all $t$ we have
\begin{align}\label{eq:bound-Delta}
\Delta(t) = O\left( \frac{ L(\Theta^{(0)}) t^{2} }{ \sqrt{m} } \right).
\end{align}

\item[{\rm (iv)}] With the notation of part (iii), for all $t$ we have
\begin{align}
    \big| L(\Theta^{(t)}) - \bar{L}( \bar{\Theta}^{(t)} ) \big| = O \left( \frac{ L(\Theta^{(0)})^{ \frac{3}{2} } t^2 }{ \sqrt{m} } \right). 
\end{align}

\end{itemize}

\end{theorem}

\medskip

Part (i) of this theorem says that parameters $\Theta^{(t)}$ do not change significantly during training. Based on this, we expect that the tangent kernel remains close to the initial tangent kernel as well. This is proven in part (ii). Next, since the tangent kernel is almost constant, we expect that our model function behaves like the linearized model in the training process (lazy training). This is formally proven in parts (iii) and (iv).

\medskip 
\begin{remark}
 The bounds of this theorem are effective when the loss function $L(\Theta^{(0)})$ at initialization is a constant independent of $m$. While we do not explore the conditions under which this holds, since $\Theta^{(0)}$ is chosen at random and $f(\Theta^{(0)}, x)$ approaches a Gaussian process, we expect to have $L(\Theta^{(0)}) = O(1)$ with high probability when we learn a \emph{bounded} function.
\end{remark} 

\medskip 
\begin{remark}\label{rem:exp-training}
Let $\bar F(t)=\big(\bar f(\bar \Theta^{(t)}, x^{(1)}), \dots, \bar{f}( \bar{\Theta}^{(t)}, x^{(n)} ) \big)$ and $Y = (y^{(1)}, \dots, y^{(n)})$. Then, since the kernel associated to the linearized model is time-independent, by~\eqref{eq:f-der-kernel} we have
$$\bar{F}(t) = \Big( \bar{F}(0) - Y \Big) e^{-\frac{t}{n}  K_{\Theta^{(0)}} } + Y.$$
This means that if $K_{\Theta^{(0)}}$ is full-rank and its minimum eigenvalue is far from zero, the training of the linearized model stops exponentially fast. In this case, the stopping time $t$ in the bounds of parts (iii) and (iv) of the theorem is small. Indeed, under the above assumption on the eigenvalues of the tangent kernel, the parameterized quantum circuit is trained exponentially fast since by part (iv) its behaviour is well-approximated by the linearized model.
\end{remark}
 
\medskip

\begin{proof}
\noindent (i) We have $\nabla \Theta^{(t)} = - \nabla_{\Theta} L(\Theta^{(t)})$, and for any $j$:
\begin{align*}
\partial_t \theta_{j}^{(t)}
= & -\frac{1}{n} \sum_{i=1}^{n} \Big( f(\Theta^{(t)}, x^{(i)}) - y^{(i)} \Big) \cdot \frac{1}{\sqrt{m}} \sum_{k=1}^{m} \frac{\partial}{\partial \theta_{j}}  f_{k}(\Theta^{(t)}, x^{(i)}) \\ 
= & -\frac{1}{n} \sum_{i=1}^{n} \Big( f(\Theta^{(t)}, x^{(i)}) - y^{(i)} \Big) \cdot \frac{1}{\sqrt{m}} \sum_{k:\; j \in \cN_{k}} \frac{\partial}{\partial \theta_{j}}  f_k(\Theta^{(t)}, x^{(i)}). 
\end{align*}
Thus, using~\eqref{eq:bound-derivative-gate} and the fact that there are a constant number of $\cN_{k}$'s containing $j$, we find that 
\begin{align*}
\big| \partial_{t} \theta_{j}^{(t)} \big|
& \leq \frac{1}{n} \sum_{i=1}^{n} \Big| f(\Theta^{(t)}, x^{(i)}) - y^{(i)} \Big| \cdot O \Big( \frac{c}{\sqrt m} \Big)\\
& \leq \Big( \frac{1}{n} \sum_{i=1}^{n} \Big| f(\Theta^{(t)}, x^{(i)}) - y^{(i)} \Big|^{2} \Big)^{1/2} \cdot O\Big( \frac{c}{\sqrt{m} } \Big)\\
& =    \Big( 2L\big( \Theta^{(t)} \big) \Big)^{1/2} \cdot O\Big(\frac{c}{\sqrt{m}} \Big).
\end{align*}
The desired bound follows once we note that we are moving in the opposite direction of the gradient of $L\big(\Theta^{(t)} \big)$ via the gradient flow equation, so $L\big( \Theta^{(t)} \big) \leq L\big( \Theta^{(0)} \big)$.

\medskip

\noindent 
(ii) Using~\eqref{eq:kxx'} we have
\begin{align*}
\partial_t K_{\Theta^{(t)}}(x,x')
& =
    \frac{1}{m} \sum_{k,k'=1}^{m} \sum_{j \in \cN_{k} \cap \cN_{k'}} \partial_{t} \bigg( \frac{\partial}{\partial \theta_{j}} f_{k}(\Theta^{(t)}, x) \cdot \frac{\partial}{\partial \theta_j} f_{k'}(\Theta^{(t)}, x') \bigg)\\ 
& = 
  \frac{1}{m} \sum_{(k, k', j) \in \Gamma} \sum_{i=1}^{p} \partial_{t} \theta_{i} \cdot \frac{\partial}{\partial \theta_{i}} \bigg( \frac{\partial}{\partial \theta_{j}} f_{k}( \Theta^{(t)}, x) \cdot \frac{\partial}{\partial \theta_j} f_{k'}(\Theta^{(t)}, x') \bigg) \\ 
& =
  \frac{1}{m} \sum_{(k,k',j) \in \Gamma} \sum_{i \in \cN_{k} \cup \cN_{k'}} \partial_{t} \theta_{i} \cdot \frac{\partial}{\partial \theta_i} \bigg( \frac{\partial}{\partial \theta_j} f_{k}(\Theta^{(t)}, x) \cdot \frac{\partial}{\partial \theta_j}f_{k'}(\Theta^{(t)}, x') \bigg).
\end{align*} 
By~\eqref{eq:bound-derivative-gate}, for any $i, j, k, k'$ we have
$$ \bigg|\frac{\partial}{\partial \theta_{i}} \bigg( \frac{\partial}{\partial \theta_j} f_{k}(\Theta^{(t)}, x) \cdot \frac{\partial}{\partial \theta_{j}} f_{k'}(\Theta^{(t)}, x') \bigg) \bigg| = O(c^{2}) = O(1).$$
\begin{align*}
   \big| \partial_{t} K_{\Theta^{(t)}}(x, x') \big| 
   = O\bigg( \frac{1}{m} \sum_{(k, k', j) \in \Gamma} \sum_{i \in \cN_{k} \cup \cN_{k'}} \big| \partial_{t} \theta_{i} \big| \bigg).
\end{align*} 
Next, recall that $|\Gamma| = O(m)$, and for any $k, k'$ the size of $\cN_{k} \cup \cN_{k'}$ is a constant. Thus, the desired bound follows from part (i).

\medskip
\noindent 
(iii) To prove this part we borrow ideas from~\cite{Chizat2019lazy}. Using~\eqref{eq:f-der-kernel} we compute
\begin{align*}
\frac{1}{2} \partial_{t} \Delta^{2}(t) 
& = \frac 1n \sum_{j=1}^{n}  \big(\partial_{t} f(\Theta^{(t)}, x^{(j)}) - \partial_{t} \bar{f}( \bar{\Theta}^{(t)}, x^{(j)}) \big) \cdot  \big( f(\Theta^{(t)}, x^{(j)}) -  \bar{f}( \bar{\Theta}^{(t)}, x^{(j)}) \big)  \\
& = -\frac {1}{n^{2}} \sum_{i, j=1}^{n}    \Big( \big(f(\Theta^{(t)}, x^{(i)}) - y^{(i)} \big) K_{\Theta^{(t)}}(x^{(i)}, x^{(j)}) \big( f(\Theta^{(t)}, x^{(j)}) -  \bar{f}( \bar{\Theta}^{(t)}, x^{(j)})\big)  \\
& \qquad \qquad \qquad  -\big( \bar{f}( \bar{\Theta}^{(t)}, x^{(i)}) - y^{(i)} \big) K_{\Theta^{(0)}}(x^{(i)}, x^{(j)}) \big( f(\Theta^{(t)}, x^{(j)}) - \bar{f}( \bar{\Theta}^{(t)}, x^{(j)} ) \big) \Big).
\end{align*}
Next, the fact that $K_{\Theta^{(0)}}$ is positive semidefinite and
$$\sum_{i, j=1}^n \big( f(\Theta^{(t)}, x^{(i)}) - \bar{f}( \bar{\Theta}^{(t)}, x^{(i)}) \big) K_{\Theta^{(0)}}(x^{(i)}, x^{(j)})  \big( f(\Theta^{(t)}, x^{(j)}) -  \bar{f} (\bar{\Theta}^{(t)}, x^{(j)}) \big) \geq 0,$$
yields,
\begin{align*}
& \frac{1}{2} \partial_{t} \Delta^{2}(t) \\
& \leq  -\frac {1}{n^2} \sum_{i, j=1}^{n} \Big( \big( f(\Theta^{(t)}, x^{(i)}) - y^{(i)} \big) K_{\Theta^{(t)}}(x^{(i)}, x^{(j)})  \big( f(\Theta^{(t)}, x^{(j)}) -  \bar{f}( \bar{\Theta}^{(t)}, x^{(j)}) \big)  \\
& \qquad \qquad \qquad -\big( f(\Theta^{(t)}, x^{(i)}) - y^{(i)} \big) K_{\Theta^{(0)}}(x^{(i)}, x^{(j)}) \big( f(\Theta^{(t)}, x^{(j)}) - \bar{f} ( \bar{\Theta}^{(t)}, x^{(j)} ) \big) \Big)\\
& = -\frac {1}{n^2} \sum_{i, j=1}^{n} \big( f(\Theta^{(t)}, x^{(i)}) - y^{(i)} \big) \cdot \big( K_{\Theta^{(t)}}(x^{(i)}, x^{(j)}) - K_{\Theta^{(0)}}(x^{(i)}, x^{(j)})\big)  \big( f(\Theta^{(t)}, x^{(j)}) - \bar{f}( \bar{\Theta}^{(t)}, x^{(j)}) \big)  \\
& \leq \frac{1}{n^2} \big\| K_{\Theta^{(t)}} - K_{\Theta^{(0)}} \big\| \cdot \bigg( \sum_{i=1}^{n} \big( f(\Theta^{(t)}, x^{(i)}) - y^{(i)} \big)^{2} \bigg)^{1/2} \cdot \bigg( \sum_{j=1}^{n} \big( f(\Theta^{(t)}, x^{(j)}) - \bar{f}( \bar{\Theta}^{(t)}, x^{(j)}) \big)^{2} \bigg)^{1/2} \\
& = \frac{\sqrt 2}{n}  \big\| K_{\Theta^{(t)}} - K_{\Theta^{(0)}} \big\| \cdot L\big( \Theta^{(t)} \big)^{1/2} \cdot \Delta(t).
\end{align*}
We also note that $\frac{1}{2} \partial_{t} \Delta^{2}(t) = \Delta(t) \cdot \partial_{t} \Delta(t)$. Therefore, 
\begin{align*}
\big| \partial_{t} \Delta(t) \big|
 \leq  
 \frac{\sqrt 2}{n} \big\| K_{\Theta^{(t)}} - K_{\Theta^{(0)}} \big\| \cdot L\big( \Theta^{(t)} \big)^{1/2} 
 \leq  
 \frac{\sqrt 2}{n} \big\| K_{\Theta^{(t)}} - K_{\Theta^{(0)}} \big\| \cdot L\big( \Theta^{(0)} \big)^{1/2}.
\end{align*}
Now, using part (ii) and the fact that $K_{\Theta^{(t)}}$ is an $n \times n$ matrix, we have 
$$\frac{1}{n}  \big\| K_{\Theta^{(t)}} - K_{\Theta^{(0)}} \big\| = O\left( \sqrt{ \frac{L\big(\Theta^{(0)}\big) }{ m}} \, t \right).$$
Therefore, 
$$\big| \partial_{t} \Delta(t) \big| = O\big( L(\Theta^{(0)} ) \frac{t}{\sqrt{m}} \big), $$
which gives the desired result by integration.

\medskip
\noindent
(iv) Using the triangle inequality for the $2$-norm, we have
\begin{align*}
    \big|L( \Theta^{(t)} ) - \bar{L}( \bar{\Theta}^{(t)}) \big| 
    & =     \Big( \sqrt{ L(\Theta^{(t)}) } + \sqrt{ \bar{L}( \bar{\Theta}^{(t)} ) } \Big) \cdot \Big| \sqrt{L(\Theta^{(t)})} - \sqrt{ \bar{L} ( \bar{\Theta}^{(t)} ) } \Big| \\
    & \leq  2 \sqrt{ L(\Theta^{(0)}) } \cdot \bigg(\frac{1}{2n} \sum_{i=1}^{n} \Big( f(\Theta^{(t)}, x^{(i)}) - \bar{f} ( \bar{\Theta}^{(t)}, x^{(i)} ) \Big)^{2} \bigg)^{1/2} \\
    & =     \sqrt{ 2L(\Theta^{(0)}) } \cdot \Delta(t) \\
    & =     O\left( \frac{L( \Theta^{(0)} )^{3/2} t^2}{\sqrt{m}} \right).
\end{align*}

\end{proof}

%%%%%%%%%%%%%%%%%%%%%%%%%%%%%%%%%%%%%%%%%%%
\section{Numerical simulations}\label{sec:simulations}

\begin{figure}
\begin{minipage}[b]{0.50\linewidth}
    \centering
    \includegraphics[width=\textwidth]{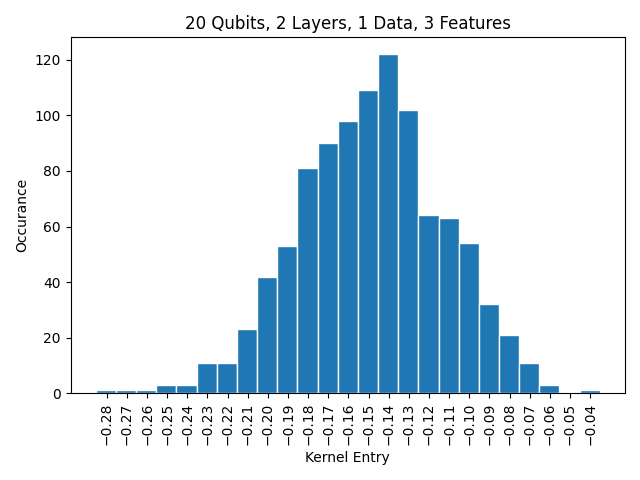}
    \caption{ 
    Histogram of $K_{\Theta}(x, x')$ for two fixed inputs $x, x'$, where $\theta_j$'s are chosen independently and uniformly at random in $[-2\pi, 2\pi]$. This histogram corresponds to the quantum circuit of Figure~\ref{fig:LLayer} for $L=2$ and $m=20$. For this experiment, the analytical mean was found to be $\bE \left[ K_{\Theta}(x,x') \right] = -0.14842$ and the empirical mean was found to be $ \bar{K}_{\Theta}(x,x') = -0.14854$. Both numbers are rounded up to the 5-th decimal point.
    }
\label{fig:histogram}
\end{minipage}
\hspace{0.4cm}
\begin{minipage}[b]{0.47\linewidth}
    \centering
    \includegraphics[width=\textwidth]{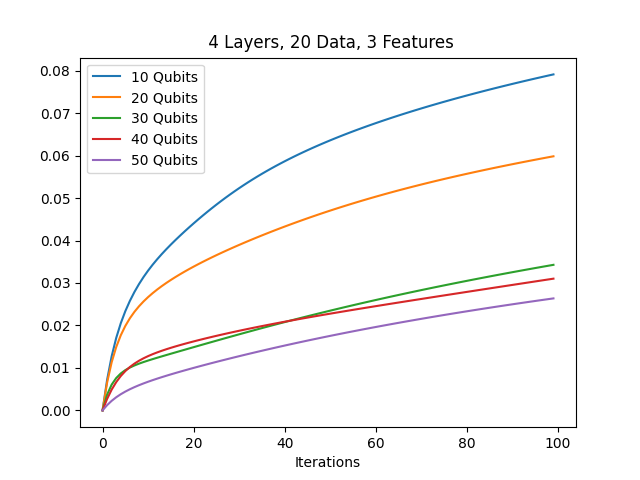}
    \caption{
    Evolution of $\frac{\| \Theta^{(t)} - \Theta^{(0)} \|}{\| \Theta^{(0)} \|}$ as a function of the number of iterations of the gradient descent algorithm. We observe that as the number of qubits $m$ increases, the relative change of parameters decrease. This means that as the number of qubits increase, training enters the lazy regime.
    \vspace{22pt}
    }
\label{fig:lazy}
\end{minipage}
\end{figure}

In this section we present numerical simulations to support  our results. To this end, we simulate the parameterized circuit of Figure~\ref{fig:LLayer}, explained in detail in Section~\ref{sec:QNN}.
To classically simulate this circuit for a large number of qubits (large $m$), we again use the idea of light cones (see Figure~\ref{fig:lightCone}). To this end, we evaluate the model function term by term, knowing that each term can be computed by a sub-circuit of constant size (when $L$ is constant). We use PennyLane~\cite{PennyLane} for our simulations.\footnote{Code is available at \href{https://github.com/phanous/quantum-lazy-training}{https://github.com/phanous/quantum-lazy-training}}

We choose the data set $D = \{ ( x^{(i)}, y^{(i)} ):  i=1, \dots, n \}$ randomly, where  $x^{(i)}$'s are  in $[-2\pi, 2\pi]$, and $y^{(i)}$'s are in $[-1, 1]$. We apply the gradient descent algorithm with a \emph{learning rate} of $\eta=1$ to train the circuit. 

We first verify Theorem~\ref{thm:concentration}. We let $L=2$, pick two random inputs $x, x'$ and compute $K_{\Theta}(x, x')$ for random choices of $\theta_{j}$ in $[-2\pi, 2\pi]$. 
Figure~\ref{fig:histogram} shows the histogram of these values. This histogram confirms that $K_{\Theta}(x, x')$ is concentrated around its average. This average is analytically computed in Appendix~\ref{app:AnalyticalComputation}, which shows
\begin{align*}
\bE\left[ K_{\Theta}(x, x') \right]
&=\frac{1}{4m} \sum_{k=1}^{m} 2 \cos(\x_{k}) \cos(\x'_{k}) +  \cos(\x_{k-1}) \cos(\x_{k}) \cos(\x_{k+1}) \cos(\x'_{k-1}) \cos(\x'_k) \cos(\x'_{k+1}).
\end{align*}

Next, in order to verify Theorem~\ref{thm:lazy}, we plot 
the relative change of the parameters $\Theta$ in the training process. That is, we plot
\begin{equation*}
    \frac{\| \Theta^{(0)} - \Theta^{(t)} \|}{\| \Theta^{(0)} \|},
\end{equation*}
where $t$ denotes the number of gradient descent iterations. As 
Figure~\ref{fig:lazy} shows, this relative change decreases by increasing the number of qubits $m$. This is an indicator of the occurrence of lazy training.

We also plot the loss functions $L(\Theta^{(t)}), \bar{L}( \Theta^{(t)} )$ of both the original quantum model and its linearized version as functions of the number of iterations in Figure~\ref{fig:LocalError}. We observe that for large numbers of qubits (e.g., $m=100$), these two loss functions have almost the same values in every step of the learning process. This confirms our results in Theorem~\ref{thm:lazy}. Moreover, we observe that as suggested in Remark~\ref{rem:exp-training}, these models converge very quickly.

\begin{figure}
\centering
\begin{subfigure}{.49 \textwidth}
    \centering
    \includegraphics[width=1\linewidth]{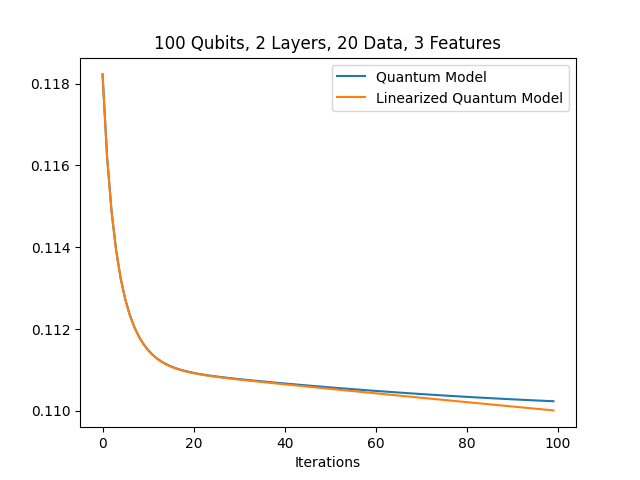}
    \caption{
        $L = 2$
    }
    \label{fig:LocalError2}
\end{subfigure}
\begin{subfigure}{.49 \textwidth}
    \centering
    \includegraphics[width=1\linewidth]{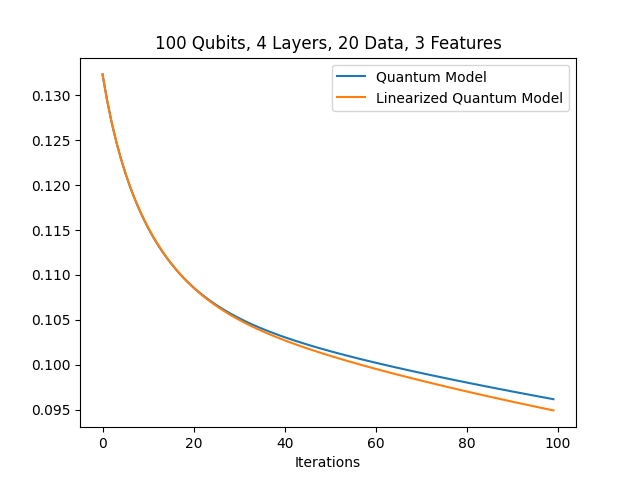} 
    \caption{
        $L = 4$
    }
    \label{fig:LocalError4}
\end{subfigure}
\caption{Evolution of the loss function as a function of the number of iterations of gradient descent. These plots correspond to the quantum circuit of  Figure~\ref{fig:LLayer} and its linearized version. We observe that for both cases of number of layers $L=2$ and $L=4$, the losses of the original quantum model and its linearized version are almost identical.}
\label{fig:LocalError}
\end{figure}

\medskip

We note that in the plots of Figure~\ref{fig:LocalError} the loss functions do not vanish. This is because, as mentioned above, the label $y^{(i)}$ for each data point $x^{(i)}$ is chosen randomly, and the quantum parameterized circuit chosen for our simulations is not expressive enough to fit such a random dataset. Alternatively, we can choose our dataset's inputs to be random $x^{(i)}$'s as before, and this time to fix the labels, pick random parameters $\Theta'$, feed the input $x^{(i)}$ to the parameterized circuit with parameters $\Theta'$, and let the outputs $y^{(i)}$ be the labels.\footnote{Of course, we forget $\Theta'$ after fixing the labels.} In this case, we make sure that our model is expressive enough to fit the dataset, and our simulations show that the loss function converges to zero as the number of iterations increase. Nevertheless, no matter how we choose the dataset, the behavior of the loss functions of the original quantum and the linearized models remain the same and they decrease with an exponential rate with the number of iterations.

\begin{figure}
\centering
\begin{subfigure}{.49\textwidth}
    \centering
    \includegraphics[width=1\linewidth]{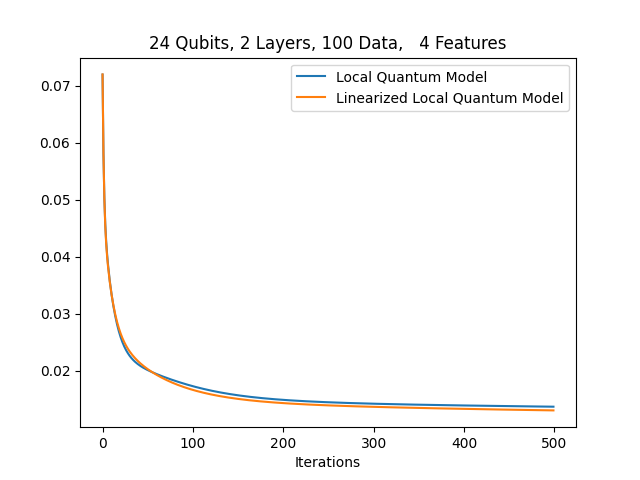}
    \caption{
        Iris flower dataset
    }
    \label{fig:iris}
\end{subfigure}
\begin{subfigure}{.49\textwidth}
    \centering
    \includegraphics[width=1\linewidth]{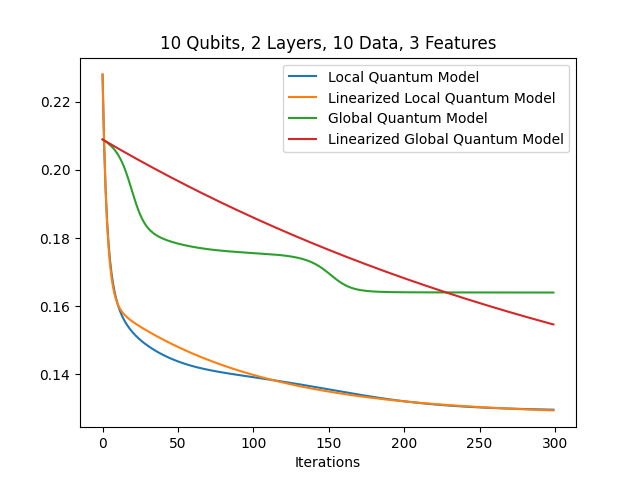}  
    \caption{
        Global model
    }
    \label{fig:Global}
\end{subfigure}
    \caption{
        (a) The loss function of the Iris flower dataset for both the quantum model and its linearized version. We see that these two loss functions are close to each other and they vanish as we increase the number of iterations. (b) The evolution of the loss function in training for the quantum circuit of Figure~\ref{fig:LLayer} and its linearized version with local and global observables. We observe that while the model with a local observable enters the lazy regime, the original and the linearized model for the global observable become separated from each other as the original global model gets trapped in a barren plateau.
    }
\label{fig:merge}
\end{figure}

We also verified our results on the Iris flower dataset.
This dataset consists of 50 data points for each three species of Iris (Iris setosa, Iris virginica and Iris versicolor), and each data point has four features. To get a binary classification problem, we picked data points corresponding to two of these three classes. We consider the same circuit as before with two layers and $m=24$ qubits. The loss function is also remained unchanged. Once again as the plot of Figure~\ref{fig:iris} shows, the loss functions of the both original quantum model and its linearized version remain close. We note that in this plot the loss function converges to zero as the number of iterations grows.

\medskip

In order to justify our assumption that the observable is geometrically local, we also consider the circuit of Figure~\ref{fig:LLayer} with a \emph{global observable}. We observe in Figure~\ref{fig:Global} that the quantum model with the global observable $O = Z_{1} Z_{2} \ldots Z_{m}$ is separated from its linearized version. This shows that the assumption of the locality of observable is necessary for lazy training. Interestingly, we also observe that the linearized version of the quantum model with a global observable doesn't learn and remains almost constant. 
This is because, as can be verified by direct computations, the associated tangent kernel is a low-rank matrix, in which case the model function has a low expressive power.

%%%%%%%%%%%%%%%%%%%%%%%%%%%%%%%%%%%%%%%%%%%
\section{Conclusion}\label{sec:conclusion}

In this paper, we proved that the training of parameterized quantum circuits that are geometrically local enters the lazy regime. This means that if the associated model function is rich enough, in which case the tangent kernel is full-rank and its eigenvalues are far from zero, training converges quickly. 

We emphasize that although in our explicit example of parameterized quantum circuit the encoding is performed only in the first layer, our results hold for general forms of data encoding including parallel and sequential ones~\cite{EncodingSchuld}. 

We proved our results under the assumptions that first, the circuit is geometrically local and second, the observable is a local operator. The first assumption is motivated by common hardware architectures, and numerical simulations suggest that the second assumption is necessary. Nevertheless, it is interesting to investigate other settings in which lazy training occurs in quantum machine learning. In particular, it would be interesting to study lazy training for quantum parameterized circuits whose number of qubits varies in different layers, i.e., fresh qubits are introduced and qubits are measured/discarded in the middle of the circuit~\cite{Beer2020training}.

Our results show that as long as the tangent kernel associated to a parameterized quantum circuit satisfying the above assumptions is full-rank and its minimum eigenvalue is far from zero, the quantum model is trained exponentially fast (see Remark~\ref{rem:exp-training}). This is in opposite direction to barren plateaus occurring in the training of certain quantum parameterized circuits~\cite{McClean2018}. The point is that the circuits considered in our work are \emph{not} random, and are geometrically local. Moreover, we consider only local observables, which remedies barren plateaus~\cite{Localcost21}.

In this paper, we fixed the loss function to be the mean squared error, yet most of the results hold for more general loss functions as well. Indeed, for a general loss function we should only modify the proof of parts (iii) and (iv) of Theorem~\ref{thm:lazy}. Modifying these parts with weaker bounds, this can be done based on ideas in~\cite{Chizat2019lazy}.

We did not explore the effect of quantum laziness in compared to its classical counterpart. For instance, how do the eigenvalues of the tangent kernel of classical and quantum models compare to each other? Which of the two models could possibly be better at generalization? We leave these questions for future works.

In the appendix, we explicitly compute the model function as well as the associated tangent kernel corresponding to a two-layer quantum circuit. We believe that such computations are insightful in understanding the expressive power of quantum parameterized circuits and their training properties.

%%%%%%%%%%%%%%%%%%%%%%%%%%%%%%%%%%%%%%%%%
\bibliography{References}

\bibliographystyle{unsrturl}

%%%%%%%%%%%%%%%%%%%%%%%%%%
%%%%%%%%APPENDIX%%%%%%%%%%

\pagebreak

\appendix

%%%%%%%%%%%%%%%%%%%%%%%%%%%%%%%%%%%%%%%%%
\section{Explicit computation of $\E[K_{\Theta}(x, x')]$}\label{app:AnalyticalComputation}

In this Appendix, we explicitly compute $\bE [K_{\Theta}(x, x')]$ for the parameterized circuit of Figure~\ref{fig:LLayer} with $L=2$ when $\theta_j$'s are chosen uniformly at random in $[-2\pi,  2\pi]$. To this end, we first explicitly compute the model function, and then compute its associated tangent kernel.

Recall that  
\begin{align*}
f(\Theta, x) =\frac{1}{\sqrt m} \sum_{k=1}^{m} f_{k}(\Theta, x),
\end{align*}
where
$$f_k(\Theta, x) = \bra{ 0 \dots 0 } \, U^{\dagger}_{\text{enc}}(x) U^{\dagger}(\Theta) Z_{k} U(\Theta) U_{\text{enc}}(x) \, \ket{0 \dots 0}.$$
We use~\eqref{eq:Ex-Geom-Local} to compute $f_{k}(\Theta, x)$.

\begin{lemma} \label{lem:compute-fk}
We have
\begin{align*}
f_k(\Theta, x)=\cos(\x_{k}) & \cos(\theta_{m+k})\cos(\theta_{k}) 
\\ &
- \cos(\x_{k-1})\cos(\x_{k})\cos(\x_{k+1})\cos(\theta_{k-1})\cos(\theta_{k+1})\sin(\theta_{k})\sin(\theta_{m+k}).
\end{align*}
\end{lemma}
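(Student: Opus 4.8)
The plan is to evaluate $f_k(\Theta,x)=\bra{0\cdots 0}U^\dagger(\Theta,x)Z_k U(\Theta,x)\ket{0\cdots 0}$ directly from the light-cone expression~\eqref{eq:Ex-Geom-Local}, working in the Heisenberg picture. Since only the three qubits $k-1,k,k+1$ appear, this is really a three-qubit calculation. I would propagate the observable $Z_k$ outward through the shells of gates sandwiching it, rewriting each $R^\dagger(\cdots)\,Z_k\,R(\cdots)$ as a linear combination of Pauli strings, and then evaluate the resulting operator in the product state $\ket{\phi}=R_{Y_{k-1}}(\x_{k-1})R_{Y_{k}}(\x_{k})R_{Y_{k+1}}(\x_{k+1})\ket{000}$ obtained by absorbing the encoding layer into the reference state. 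This is cleaner than the Schr\"odinger picture because $Z_k$ is a single local operator with very simple conjugation rules.

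The computation proceeds in three conjugation steps matching the three shells around $Z_k$ in~\eqref{eq:Ex-Geom-Local}. First, the innermost single-qubit rotation gives $R_{X_k}^\dagger(\theta_{m+k})\,Z_k\,R_{X_k}(\theta_{m+k})=\cos\theta_{m+k}\,Z_k+\sin\theta_{m+k}\,Y_k$, using $R_X^\dagger(\theta)\,Z\,R_X(\theta)=\cos\theta\,Z+\sin\theta\,Y$ (valid since $X$ and $Z$ anticommute). Second, I conjugate by the two controlled-$Z$ gates: the $Z_k$ term is untouched (it is diagonal and commutes with $\CZ$), while the $Y_k$ term gets dressed on its neighbors via the identity $\CZ_{a,b}\,Y_b\,\CZ_{a,b}=Z_a Y_b$, producing $Z_{k-1}Z_{k+1}Y_k$. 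Third, I conjugate by the first layer $R_{X_{k-1}}(\theta_{k-1})R_{X_k}(\theta_k)R_{X_{k+1}}(\theta_{k+1})$, applying on each qubit separately the rules $R_X^\dagger(\theta)\,Z\,R_X(\theta)=\cos\theta\,Z+\sin\theta\,Y$ and $R_X^\dagger(\theta)\,Y\,R_X(\theta)=\cos\theta\,Y-\sin\theta\,Z$.

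Finally I would take the expectation in $\ket{\phi}$, which factorizes qubit by qubit. The only single-qubit expectations needed are $\bra{0}R_Y^\dagger(\x_j)\,Z_j\,R_Y(\x_j)\ket{0}=\cos\x_j$ and the vanishing value $\bra{0}R_Y^\dagger(\x_j)\,Y_j\,R_Y(\x_j)\ket{0}=0$; the latter annihilates every Pauli string containing a stray $Y$ factor. After this pruning, the $Z_k$-branch contributes $\cos\theta_{m+k}\cos\theta_k\cos\x_k$ and the dressed branch contributes $-\sin\theta_{m+k}\sin\theta_k\cos\theta_{k-1}\cos\theta_{k+1}\cos\x_{k-1}\cos\x_k\cos\x_{k+1}$, which is exactly the claimed formula. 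The only genuine bookkeeping hazard is the entangling step: the $\CZ$ gates turn the single-qubit observable into a three-qubit Pauli string, so one must keep careful track of which rotation acts on which qubit and of the signs in the $R_X$-conjugation identities. Once the $\langle Y\rangle=0$ observation is used to discard terms, everything else collapses to a routine product of sines and cosines.
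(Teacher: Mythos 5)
Your proof is correct, and it takes a genuinely different route from the paper's. The paper works in the Schr\"odinger picture: it expands $Z_k=\sum_{b}(-1)^{b_k}\ket{b_{k-1}b_kb_{k+1}}\bra{b_{k-1}b_kb_{k+1}}$, writes $f_k$ as a signed sum of squared amplitudes of the three-qubit sub-circuit, introduces the auxiliary states $\ket{(\theta,x)}$ and $\ket{(\theta,b,s)}$, and reduces everything to two overlap identities proved separately in Lemma~\ref{lem:phi-psi} by brute-force expansion of $\big|\bra{(\theta,b,s)}(\theta',x)\rangle\big|^2$. You instead propagate $Z_k$ outward in the Heisenberg picture: $R_{X_k}^\dagger(\theta_{m+k})Z_kR_{X_k}(\theta_{m+k})=\cos\theta_{m+k}Z_k+\sin\theta_{m+k}Y_k$, the $\CZ$ conjugation dresses the $Y_k$ branch into $Z_{k-1}Y_kZ_{k+1}$, the first rotation layer acts qubitwise via the stated conjugation rules, and the expectation in the product state $R_Y(\x_{k-1})R_Y(\x_k)R_Y(\x_{k+1})\ket{000}$ factorizes with $\langle Z\rangle=\cos\x_j$ and $\langle Y\rangle=0$ killing all stray-$Y$ strings; I checked the signs (in particular $R_X^\dagger(\theta)YR_X(\theta)=\cos\theta\,Y-\sin\theta\,Z$, whence the minus sign in the second term) and they come out right. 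What each buys: the paper's amplitude calculation is self-contained but requires the somewhat intricate case analysis of Lemma~\ref{lem:phi-psi}; your Pauli-propagation argument is shorter, makes the light-cone structure and the origin of each trigonometric factor transparent, and generalizes mechanically to more layers, at the cost of having to track signs and qubit assignments carefully through the entangling step, exactly as you flag.
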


\begin{proof}
Using~\eqref{eq:Ex-Geom-Local} we find that $f_k(\Theta, x)$ is the outcome of a sub-circuit acting on qubits $k-1, k, k+1$, where indices are modulo $m$. On the other hand, we have 
$$Z_k= \sum_{b_{k-1}, b_{k}, b_{k+1} \in \{0,1\}} (-1)^{b_k} \ket{b_{k-1} b_{k} b_{k+1}}\bra{b_{k-1} b_{k} b_{k+1}}.$$ 
Therefore, by~\eqref{eq:Ex-Geom-Local} we have

\begin{align*}
f_k(\Theta,  x) = & \sum_{b_{k-1}, b_{k}, b_{k+1} \in \{0,1\}} (-1)^{b_{k}} \Big| \bra{b_{k-1} b_{k} b_{k+1}} R_{X_{k}}(\theta_{m+k}) \CZ_{k, k+1} \CZ_{k-1, k} \\
& \hspace{100pt} R_X(\theta_{k-1},\theta_{k}, \theta_{k+1}) R_{Y}( \x_{k-1}, \x_{k} ,\x_{k+1} )  \ket{0 0 0} \Big|^2,
\end{align*}

where
$$R_X(\theta_{k-1}, \theta_{k}, \theta_{k+1}) = R_{X_{k+1}}(\theta_{k+1})
R_{X_{k}}(\theta_{k}) R_{X_{k-1}}(\theta_{k-1}),$$ 
and $R_{Y}(\x_{k-1}, \x_k, \x_{k+1})$ is defined similarly. Using the notation $$\ket{(\theta, x)} =  R_X(\theta) R_Y(x) \ket{0},$$
we have
$$R_X(\theta_{k-1},\theta_{k}, \theta_{k+1}) R_{Y}(\x_{k-1},\x_{k}, \x_{k+1})  \ket{0 0 0} = \ket{(\theta_{k-1}, \x_{k-1})} \ket{(\theta_{k}, \x_{k})} \ket{(\theta_{k+1}, \x_{k+1})}.$$
We also have
\begin{align*}
 &     \bra{b_{k-1}  b_{k} b_{k+1}} R_{X_k}(\theta_{m+k}) \CZ_{k, k+1}\CZ_{k-1, k} \\
= & \, \bra{b_{k-1}} \otimes \Big(\cos(\frac{\theta_{m+k}}{2}) \bra{b_{k}} - i\sin(\frac{\theta_{m+k}}{2})
\bra{b_{k+1}} \Big) \otimes \bra{b_{k+1}} \CZ_{k, k+1} \CZ_{k-1, k} \\
= & \, (-1)^{b_{k} ( b_{k+1} + b_{k-1} )} \bra{b_{k-1}} \otimes \Big( \cos(\frac{\theta_{m+k}}{2}) \bra{b_{k}} - i(-1)^{(b_{k+1}+b_{k-1})} \sin(\frac{\theta_{m+k}}{2}) \bra{b_{k+1}} \Big)\otimes \bra{b_{k+1}}\\
= & \, (-1)^{b_k(b_{k+1}+b_{k-1})} \bra{b_{k-1}} \otimes \bra{(\theta_{m+k}, b_k, b_{k+1} + b_{k-1})} \otimes \bra{b_{k+1}},
\end{align*}
where $\ket{\left( \theta, b, s \right)}$ is defined by
$$\ket{(\theta, b, s)} = \cos(\frac{\theta}{2}) \ket{b} + i(-1)^{s} \sin(\frac{\theta}{2})\ket{b + 1}.$$
Therefore,
\begin{align*}
 f_{k}(\Theta, x) = \sum_{b_{k-1}, b_{k}, b_{k+1} } (-1)^{b_{k}}  \Big|\bra{\left(\theta_{m+k}, b_k, b_{k+1} + b_{k-1} \right)} \left(\theta_{k}, \x_{k} \right) \rangle \Big|^2 \prod_{p \in \{ k-1, k+1 \}} \Big| \langle{b_{p}} \ket{ \left(\theta_{p}, \x_{p} \right) } \Big|^2 
\end{align*}
Now,  using part (ii) of Lemma~\ref{lem:phi-psi}, we obtain
\begin{align*}
f_k&(\Theta, x) \\
= & 
\sum_{b_{k-1}, b_{k+1} } \prod_{p \in \{k-1, k+1\}} \Big| \langle{b_{p}} \ket{ \left( \theta_{p}, \x_{p} \right) } \Big|^2 \cos(\x_{k}) \Big(\cos(\theta_{m+k}) \cos(\theta_k) - (-1)^{b_{k-1}+b_{k+1}} \sin(\theta_{m+k})\sin(\theta_{k})\Big).
\end{align*}
Next, using the fact that $\ket{(\theta_{p}, \x_{p})}$ is a normal vector and $\sum_{b_{p}} \big| \bra{b_p} (\theta_{p}, \x_{p}) \big|^2 = 1$, as well as part (i) of Lemma~\ref{lem:phi-psi} we find that 
\begin{align*}
f_k(\Theta, x) 
 = \cos(\x_k) 
 &   \cos(\theta_{m+k}) \cos(\theta_k) \\ 
 & - \cos( \x_{k-1} ) \cos( \x_{k} ) \cos( \x_{k+1} ) \cos( \theta_{k-1} ) \cos( \theta_{k+1} ) \sin( \theta_{k} ) \sin( \theta_{m+k} ).
\end{align*}
\end{proof}

Now recall that the tangent kernel is given by
\begin{align*}
    K_{\Theta}(x, x') 
    &= \sum_{j=1}^{2m} \frac{\partial}{\partial \theta_{j}} f(\Theta, x) \cdot \frac{\partial}{\partial \theta_{j}} f(\Theta, x') \\
    &= \frac{1}{m} \sum_{j=1}^{2m} \sum_{k, k'=1}^{m} \frac{\partial}{\partial \theta_{j}} f_{k}(\Theta, x) \cdot \frac{\partial}{\partial \theta_{j}} f_{k'}(\Theta, x').
\end{align*}
Suppose that we choose $\theta_1, \dots, \theta_{2m}$ independently and uniformly at random in $[-2\pi, 2\pi]$. We note that for such a random $\theta$ we have $\E[\cos(\theta)] = \E[\sin(\theta)] = 0$. Based on this and using Lemma~\ref{lem:compute-fk}, it is not hard to  verify that  
$$ \E \left[ \frac{\partial}{\partial \theta_{j}} f_{k}(\Theta, x) \cdot \frac{\partial}{\partial \theta_j} f_{k'}(\Theta, x') \right] = 0, \quad \forall k \neq k'.$$

Next, using the fact that $f_{k}(\Theta, x)$ depends only on parameters $\theta_{k-1}, \theta_{k}, \theta_{k+1}$ and $\theta_{m+k}$ we have
\begin{align*}
\E \left[ K_{\Theta}(x, x') \right]
& = \frac {1}{m} \sum_{k=1}^{m} \sum_{j \in \{k-1, k, k+1, m+k \}} \E \Big[\frac{\partial}{\partial \theta_j} f_{k}(\Theta, x) \cdot \frac{\partial}{\partial \theta_{j}} f_{k}(\Theta, x') \Big].
\end{align*}

Then, by Lemma~\ref{lem:fkfk'} we find that
\begin{align*}
\bE 
&   \Big[  K_{\Theta}(x, x') \Big] \\ 
& = \frac {1}{m} \sum_{k=1}^{m} \frac 24 \cos(\x_{k}) \cos(\x'_{k}) + \frac {4}{16} \cos(\x_{k-1}) \cos(\x_{k}) \cos(\x_{k+1}) \cos(\x'_{k-1}) \cos(\x'_k) \cos(\x'_{k+1}) \\
& = \frac {1}{4m} \sum_{k=1}^{m}  2 \cos(\x_{k})\cos(\x'_{k}) + \cos(\x_{k-1}) \cos(\x_{k}) \cos(\x_{k+1}) \cos(\x'_{k-1}) \cos(\x'_{k}) \cos(\x'_{k+1}).
\end{align*}

\paragraph{Auxiliary lemmas:} Here we present some auxiliary lemmas needed in the above proofs in this appendix.

\begin{lemma} \label{lem:phi-psi}
Let
$\ket{(\theta, x)} =  R_X(\theta) R_Y(x) \ket{0}$
and
$\ket{(\theta, b, s)} = \cos(\frac{\theta}{2}) \ket{b} + i(-1)^{s} \sin(\frac{\theta}{2}) \ket{b+1}$.

Then, the followings hold:
\begin{enumerate}
\item[{\rm (i)}] $ \sum_{b \in \{0, 1\}} (-1)^{b} \big| \langle b \ket{ \left(\theta, x \right)}  \big|^2 = \cos(\theta) \cos(x)$.
\item[{\rm (ii)}] $ \sum_{b \in \{0, 1\}}(-1)^{b} \big|\bra{\left(\theta, b, s \right)} \left(\theta', x \right) \rangle \big|^2 = 
\cos(x) \Big( \cos(\theta) \cos(\theta') - (-1)^{s} \sin(\theta) \sin(\theta') \Big)$.
\end{enumerate}
\end{lemma}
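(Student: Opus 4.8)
The plan is to reduce both identities to the expectation value of the Pauli-$Z$ operator in a single-qubit state. For any normalized one-qubit state $\ket\psi$ we have $\sum_{b\in\{0,1\}}(-1)^b\big|\langle b\ket\psi\big|^2=\bra\psi Z\ket\psi$, since $Z=\ket 0\bra 0-\ket 1\bra 1$. Hence part (i) is exactly $\bra{(\theta,x)}Z\ket{(\theta,x)}$. To compute this I would first apply $R_Y(x)=\cos(x/2)I-i\sin(x/2)Y$ to get $R_Y(x)\ket 0=\cos(x/2)\ket 0+\sin(x/2)\ket 1$, and then apply $R_X(\theta)=\cos(\theta/2)I-i\sin(\theta/2)X$ to obtain explicit amplitudes for $\langle 0\ket{(\theta,x)}$ and $\langle 1\ket{(\theta,x)}$. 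Taking the difference of their squared moduli, the cross terms cancel and the expression factors as $\big(\cos^2(x/2)-\sin^2(x/2)\big)\big(\cos^2(\theta/2)-\sin^2(\theta/2)\big)=\cos(x)\cos(\theta)$, which is the claim. Alternatively, one can work in the Heisenberg picture using $R_X^\dagger(\theta)ZR_X(\theta)=\cos\theta\,Z+\sin\theta\,Y$ and $R_Y^\dagger(x)ZR_Y(x)=\cos x\,Z-\sin x\,X$, and read off the $\ket 0$-expectation; only the $\cos\theta\cos x\,\bra 0 Z\ket 0$ term survives.

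The key step for part (ii)—and the one that makes the whole lemma short—is the observation that $\ket{(\theta,b,s)}$ is itself an $X$-rotation of a computational basis state. Comparing the definition $\ket{(\theta,b,s)}=\cos(\theta/2)\ket b+i(-1)^s\sin(\theta/2)\ket{b+1}$ with $R_X(\phi)\ket b=\cos(\phi/2)\ket b-i\sin(\phi/2)\ket{b+1}$, I would match $\phi=-(-1)^s\theta$ and conclude
\[
\ket{(\theta,b,s)}=R_X\big(-(-1)^s\theta\big)\ket b,\qquad b\in\{0,1\}.
\]
Using the one-parameter group law $R_X^\dagger(\phi_1)R_X(\phi_2)=R_X(\phi_2-\phi_1)$, the overlap then rewrites as
\[
\big|\bra{(\theta,b,s)}(\theta',x)\rangle\big|^2=\big|\bra{b}R_X\big(\theta'+(-1)^s\theta\big)R_Y(x)\ket 0\big|^2=\big|\langle b\ket{(\theta'+(-1)^s\theta,\,x)}\big|^2.
\]

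This reduces part (ii) directly to part (i): summing against $(-1)^b$ yields $\cos(x)\cos\big(\theta'+(-1)^s\theta\big)$, and expanding with the angle-addition formula together with $\cos((-1)^s\theta)=\cos\theta$ and $\sin((-1)^s\theta)=(-1)^s\sin\theta$ produces $\cos(x)\big(\cos\theta\cos\theta'-(-1)^s\sin\theta\sin\theta'\big)$, exactly the stated identity. I expect the only real obstacle to be spotting the rewriting $\ket{(\theta,b,s)}=R_X(-(-1)^s\theta)\ket b$; without it, part (ii) degenerates into a direct but messy computation of the two overlaps and their difference, whereas with it the computation collapses to a single application of part (i).
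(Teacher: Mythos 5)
Your proof is correct. For part (i) you follow essentially the paper's route (explicit amplitudes, cross terms cancel, the expression factors as $(\cos^2(x/2)-\sin^2(x/2))(\cos^2(\theta/2)-\sin^2(\theta/2))$); the Heisenberg-picture variant you mention also checks out. For part (ii), however, you take a genuinely different and considerably slicker path. The paper proves (ii) by brute force: it expands $\big|\bra{(\theta,0,s)}(\theta',x)\rangle\big|^2$ and $\big|\bra{(\theta,1,s)}(\theta',x)\rangle\big|^2$ separately into sums of products of half-angle sines and cosines and then subtracts, which takes about a page of bookkeeping. Your observation that $\ket{(\theta,b,s)}=R_X\big(-(-1)^s\theta\big)\ket{b}$ is correct (cosine is even and $\sin\big(-(-1)^s\theta/2\big)=-(-1)^s\sin(\theta/2)$, matching the required sign on the $\ket{b+1}$ amplitude), and combined with the group law $R_X^\dagger(\phi_1)R_X(\phi_2)=R_X(\phi_2-\phi_1)$ it reduces (ii) to (i) evaluated at the shifted angle $\theta'+(-1)^s\theta$; the angle-addition formula then yields exactly $\cos(x)\big(\cos\theta\cos\theta'-(-1)^s\sin\theta\sin\theta'\big)$. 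What your approach buys is brevity and a structural explanation of \emph{why} the answer has the form of an expanded cosine of a sum; what the paper's direct computation buys is that it requires no insight and produces the intermediate overlap expressions, which it happens not to reuse elsewhere, so nothing is lost by your shortcut.
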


\begin{proof}
(i) We have
\begin{align}\label{eq:theta,x}
\ket{(\theta, x)} 
 = & R_{X}(\theta) R_{Y}(x) \ket{0} \nonumber \\ 
 = & \Big( \cos(\frac{\theta}{2})I - i\sin(\frac{\theta}{2})X \Big) \Big( \cos(\frac{x}{2})I - i\sin(\frac{x}{2})Y \Big)\ket{0} \nonumber \\
 = & \Big( \cos(\frac{\theta}{2}) \cos(\frac{x}{2}) - i \sin(\frac{\theta}{2}) \sin(\frac{x}{2}) \Big) \ket{0}
 \nonumber \\ 
   & \hspace{100pt}
 + \Big( \cos( \frac{\theta}{2} ) \sin(\frac{x}{2}) - i \sin(\frac{\theta}{2}) \cos(\frac{x}{2}) \Big) \ket{1}.
\end{align}
Therefore,
\begin{align*}
\sum_{b \in \{0,1\}} 
  & (-1)^{b} \big| \langle b \ket{\left(\theta, x\right)} \big|^2 \\ 
= & \cos^{2} \Big( \frac{\theta}{2}\Big) \cos^{2} \left( \frac{x}{2} \right) + \sin^{2} \Big( \frac{\theta}{2} \Big)\sin^2 \left( \frac{x}{2} \right) - \cos^{2} \Big( \frac{\theta}{2} \Big) \sin^{2} \left( \frac{x}{2} \right) - \sin^{2} \Big( \frac{\theta}{2} \Big) \cos^{2} \left(\frac{x}{2} \right) \\ 
= &
\bigg(\cos^{2} \Big( \frac{\theta}{2} \Big) - \sin^{2} \Big( \frac{\theta}{2} \Big) \bigg)\bigg(\cos^{2} \left( \frac{x}{2} \right) - \sin^{2} \left( \frac{x}{2} \right)\bigg) \\
= & \cos(\theta) \cos(x).
\end{align*}

\medskip
\noindent
(ii) Using~\eqref{eq:theta,x}, we have
\begin{align*}
\big| & \bra{(\theta, 0, s)}  (\theta', x) \rangle \big|^{2}\\
& = \Big| \cos(\theta/2) \left( \cos(\theta'/2) \cos(x/2) - i \sin(\theta'/2) \sin(x/2) \right)  \\
& \hspace{100pt}  -i(-1)^{s} \sin(\theta/2) \left( \cos(\theta'/2) \sin(x/2) - i \sin(\theta'/2)\cos(x/2) \right) \Big|^2 \\
& =  \Big| \cos(\theta/2) \cos(\theta'/2) \cos(x/2) - (-1)^{s} \sin(\theta/2) \sin(\theta'/2) \cos(x/2) \\
& \hspace{100pt} -i\left( \cos(\theta/2) \sin(\theta'/2) \sin(x/2) + (-1)^{s}  \sin(\theta/2)\cos(\theta'/2) \sin(x/2) \right) \Big|^2 \\
& = \cos^{2}(\theta/2) \cos^{2}(\theta'/2) \cos^{2}(x/2) + \sin^{2}(\theta/2) \sin^{2}(\theta'/2)\cos^{2}(x/2) - \frac{(-1)^{s}}{2} \sin(\theta) \sin(\theta') \cos^{2}(x/2) \\
& \quad + \cos^{2}(\theta/2) \sin^{2}(\theta'/2) \sin^2(x/2) + \sin^{2}(\theta/2) \cos^{2}(\theta'/2) \sin^{2}(x/2) + \frac{(-1)^{s}}{2} \sin(\theta) \sin(\theta') \sin^{2}(x/2),
\end{align*}
and
\begin{align*}
\big| & \bra{(\theta, 1, s)} (\theta', x) \rangle \big|^2 \\
& = \Big| \cos(\theta/2) \left( \cos(\theta'/2) \sin(x/2) - i \sin(\theta'/2) \cos(x/2) \right)  \\
& \hspace{100pt} - i(-1)^{s} \sin(\theta/2) \left( \cos(\theta'/2) \cos(x/2) - i \sin(\theta'/2)\sin(x/2) \right) \Big|^2 \\
& = \Big| \cos(\theta/2) \cos(\theta'/2) \sin(x/2) - (-1)^{s} \sin(\theta/2) \sin(\theta'/2) \sin(x/2) \\
& \hspace{100pt} -i\left( \cos(\theta/2) \sin(\theta'/2) \cos(x/2) + (-1)^{s} \sin(\theta/2)\cos(\theta'/2) \cos(x/2) \right) \Big|^2 \\
& = \cos^{2}(\theta/2) \cos^{2}(\theta'/2) \sin^{2}(x/2) + \sin^{2}(\theta/2) \sin^{2}(\theta'/2)\sin^{2}(x/2) - \frac{(-1)^{s}}{2} \sin(\theta) \sin(\theta') \sin^{2}(x/2) \\
& \quad + \cos^{2}(\theta/2) \sin^2(\theta'/2) \cos^2(x/2) + \sin^{2}(\theta/2) \cos^2(\theta'/2)\cos^{2}(x/2) + \frac{(-1)^{s}}{2} \sin(\theta) \sin(\theta') \cos^{2}(x/2).
\end{align*}
Therefore,
\begin{align*}
\big|
& \bra{(\theta, 0, s)} (\theta', x) \rangle \big|^{2} - \big| \bra{(\theta, 1, s)} (\theta', x) \rangle \big|^{2} \\
& = \Big(\cos^{2}(x/2) - \sin^{2}(x/2) \Big) \bigg(\cos^{2}(\theta/2) \cos^2(\theta'/2) + \sin^{2}(\theta/2) \sin^{2}(\theta'/2)  \\ 
& \hspace{145pt} -\frac{(-1)^{s}}{2} \sin(\theta) \sin(\theta') - \cos^{2}(\theta/2)\sin^{2}(\theta'/2) \\ 
& \hspace{145pt}
- \sin^{2}(\theta/2) \cos^{2}(\theta'/2) - \frac{(-1)^{s}}{2} \sin(\theta) \sin(\theta') \bigg) \\
& = \Big(\cos^{2}(x/2) - \sin^{2}(x/2) \Big) \bigg( \cos^{2}(\theta/2) \Big( \cos^{2}(\theta'/2) - \sin^{2}(\theta'/2) \Big)\\
& \hspace{145pt}  
-\sin^{2}(\theta/2) \Big(\cos^{2}(\theta'/2) - \sin^{2}(\theta'/2) \Big) - (-1)^{s} \sin(\theta)\sin(\theta') \bigg)\\
& =  \cos(x) \Big( \cos(\theta) \cos(\theta') - (-1)^{s} \sin(\theta) \sin(\theta') \Big).
\end{align*}

\end{proof}

\begin{lemma}\label{lem:fkfk'}
The followings hold:
 \begin{align*}
\bE \Big[ \frac{\partial}{\partial \theta_{k-1}} f_{k}(\Theta,  x) \cdot \frac{\partial}{\partial \theta_{k-1}} f_{k} 
& (\Theta,  x') \Big] =  \bE\left[ \frac{\partial}{\partial \theta_{k+1}} f_{k}(\Theta,  x) \cdot \frac{\partial}{\partial \theta_{k+1}} f_{k}(\Theta,  x') \right] \\
& = \frac {1}{16} \cos(\x_{k-1}) \cos(\x_{k}) \cos(\x_{k+1}) \cos(\x'_{k-1}) \cos(\x'_{k}) \cos(\x'_{k+1}),
\end{align*}

and 
\begin{align*}
\bE\Big[\frac{\partial}{\partial \theta_{k}} f_{k} & (\Theta, x)  \cdot \frac{\partial}{\partial \theta_{k}} f_{k}(\Theta,  x') \Big]  = \bE \left[ \frac{\partial}{\partial \theta_{m+k}} f_{k}(\Theta, x) \cdot \frac{\partial}{\partial \theta_{m+k}} f_{k}(\Theta,  x') \right]\\
& = \frac{1}{4} \cos(\x_{k}) \cos(\x'_{k}) + \frac{1}{16} \cos(\x_{k-1}) \cos(\x_{k}) \cos(\x_{k+1}) \cos(\x'_{k-1}) \cos(\x'_{k}) \cos(\x'_{k+1}).
\end{align*}

\end{lemma}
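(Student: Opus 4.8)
The plan is to differentiate the closed form of $f_k(\Theta,x)$ supplied by Lemma~\ref{lem:compute-fk} with respect to each of the four relevant parameters $\theta_{k-1},\theta_k,\theta_{k+1},\theta_{m+k}$, and then take expectations, exploiting that these four are independent and uniform on $[-2\pi,2\pi]$. For such a variable $\theta$ a full-period integration gives $\E[\cos^2\theta]=\E[\sin^2\theta]=\tfrac12$ and $\E[\sin\theta\cos\theta]=\tfrac12\E[\sin 2\theta]=0$; these three moments, together with independence, are all I need. To lighten the notation I would set $A=\cos(\x_k)$ and $B=\cos(\x_{k-1})\cos(\x_k)\cos(\x_{k+1})$, with $A',B'$ the primed analogues, so that Lemma~\ref{lem:compute-fk} reads $f_k(\Theta,x)=A\cos(\theta_{m+k})\cos(\theta_k)-B\cos(\theta_{k-1})\cos(\theta_{k+1})\sin(\theta_k)\sin(\theta_{m+k})$.

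First I would treat $\theta_{k-1}$, which by the symmetry of $f_k$ under $\theta_{k-1}\leftrightarrow\theta_{k+1}$ immediately settles $\theta_{k+1}$ as well. Only the second term of $f_k$ depends on $\theta_{k-1}$, so $\tfrac{\partial}{\partial\theta_{k-1}}f_k(\Theta,x)=B\sin(\theta_{k-1})\cos(\theta_{k+1})\sin(\theta_k)\sin(\theta_{m+k})$, and the product of the $x$- and $x'$-derivatives factors completely across the four independent parameters. Taking the expectation and inserting $\E[\sin^2]=\E[\cos^2]=\tfrac12$ four times yields $BB'/16$, exactly the claimed value.

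The cases $\theta_k$ and $\theta_{m+k}$ are the slightly more involved ones, since both terms of $f_k$ now contribute to the derivative. For $\theta_k$ the derivative is $-A\cos(\theta_{m+k})\sin(\theta_k)-B\cos(\theta_{k-1})\cos(\theta_{k+1})\cos(\theta_k)\sin(\theta_{m+k})$, and multiplying the $x$- and $x'$-versions produces four monomials. The only point requiring any care is that the two cross terms each carry a factor $\sin(\theta_{m+k})\cos(\theta_{m+k})$ (equivalently $\sin\theta_k\cos\theta_k$), whose expectation vanishes; hence only the two ``diagonal'' terms survive, contributing $\tfrac14 AA'$ and $\tfrac1{16}BB'$ respectively, which is the asserted sum. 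The computation for $\theta_{m+k}$ is identical after interchanging the roles of $\sin$ and $\cos$ on the pair $\theta_k,\theta_{m+k}$, and returns the same value, consistent with the $\theta_k\leftrightarrow\theta_{m+k}$ symmetry of $f_k$.

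No step is genuinely difficult here: this is a moment computation, and the only thing to watch is the bookkeeping in expanding the products and correctly recognizing that every surviving monomial is an even power in each $\theta$, while every vanishing one contains an odd $\sin\cos$ pairing in a single variable. Accordingly I would write out the $\theta_k$ calculation in full and then invoke the two symmetries $\theta_{k-1}\leftrightarrow\theta_{k+1}$ and $\theta_k\leftrightarrow\theta_{m+k}$ to dispatch the remaining three derivatives without repeating the algebra.
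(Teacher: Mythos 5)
Your proposal is correct and follows essentially the same route as the paper: differentiate the closed form of $f_k$ from Lemma~\ref{lem:compute-fk}, use independence together with $\E[\cos^2\theta]=\E[\sin^2\theta]=\tfrac12$ and $\E[\sin\theta\cos\theta]=0$, observe that the cross terms vanish, and dispatch the remaining derivatives by symmetry. No gaps.
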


\begin{proof}
We note that
$$ \bE \left[ \cos^{2}(\theta) \right] = \bE \left[ \sin^2(\theta) \right] = \bE \Big[ \Big( \frac{\partial}{\partial \theta} \cos(\theta) \Big)^{2} \Big] = \bE \Big[ \Big( \frac{\partial}{\partial \theta} \sin(\theta) \Big)^{2} \Big] = \frac{1}{2},$$
and 
$$ \bE \left[ \sin(\theta) \cos(\theta) \right] = \bE \left[\frac{\partial}{\partial \theta} \sin(\theta) \cdot \frac{\partial}{\partial \theta} \cos(\theta) \right] = 0. $$ 
Then, using Lemma~\ref{lem:compute-fk} we compute:
\begin{align*}
   \bE \Big[ \frac{\partial}{\partial \theta_{k-1}} f_{k}(\Theta, x) \cdot 
   &   \frac{\partial}{\partial \theta_{k-1}} f_{k}(\Theta,  x') \Big] \\ 
   & = \cos(\x_{k-1}) \cos(\x_{k}) \cos(\x_{k+1}) \cos(\x'_{k-1}) \cos(\x'_{k}) \cos(\x'_{k+1}) \\
   &   \qquad \cdot \bE \left[ \Big( \frac{\partial}{\partial \theta_{k-1}} \cos(\theta_{k-1}) \Big)^{2} \cdot \cos^{2}(\theta_{k+1}) \cdot \sin^{2}(\theta_{k}) \cdot \sin^{2}(\theta_{m+k}) \right]\\
   & = \frac{1}{16} \cos(\x_{k-1}) \cos(\x_{k}) \cos(\x_{k+1}) \cos(\x'_{k-1}) \cos(\x'_{k}) \cos(\x'_{k+1}).
\end{align*}
%$\bE\left[\frac{\partial}{\partial \theta_{k+1}} f_k(\Theta,  x) \cdot \frac{\partial}{\partial \theta_{k+1}} f_k(\Theta,  x')\right]$ is computed similarly. 
Next, we have
\begin{align*}
    \frac{\partial}{\partial \theta_{k}} f_{k}(\Theta,  x) 
    & =\cos(\x_{k}) \cos(\theta_{m+k}) \frac{\partial}{\partial \theta_{k}} \cos(\theta_{k})\\
    & \qquad
- \cos(\x_{k-1}) \cos(\x_{k}) \cos(\x_{k+1}) \cos(\theta_{k-1}) \cos(\theta_{k+1}) \frac{\partial}{\partial \theta_{k}} \sin(\theta_{k}) \sin(\theta_{m+k}). 
\end{align*}
Therefore, since 
\begin{align*}
    \bE\Big[\frac{\partial}{\partial \theta_{k}}\cos(\theta_k) \cos(\theta_{k-1})\cos(\theta_{k+1})\frac{\partial}{\partial \theta_{k}}\sin(\theta_{k})\sin(\theta_{m+k})\Big]=0,
\end{align*}
we have
\begin{align*}
    \bE \Big[ \frac{\partial}{\partial \theta_{k}} f_{k} & (\Theta, x) \cdot \frac{\partial}{\partial \theta_{k}} f_{k}(\Theta, x') \Big] \\
    &= \cos(\x_{k}) \cos(\x'_{k}) \bE \Big[ \cos^{2}(\theta_{m+k}) \Big(\frac{\partial}{\partial \theta_{k}} \cos(\theta_{k}) \Big)^{2} \Big]\\
    & \qquad + \cos(\x_{k-1}) \cos(\x_{k}) \cos(\x_{k+1}) \cos(\x'_{k-1}) \cos(\x'_{k}) \cos(\x'_{k+1}) \\
    & \qquad \qquad \cdot \E \Big[ \cos^{2}(\theta_{k-1}) \cos^2(\theta_{k+1}) \Big( \frac{\partial}{\partial \theta_{k}} \sin(\theta_{k}) \Big)^{2} \sin^{2}(\theta_{m+k}) \Big] \\
    & = \frac{1}{4} \cos(\x_{k})\cos(\x'_{k}) +\frac{1}{16} \cos(\x_{k-1}) \cos(\x_{k}) \cos(\x_{k+1}) \cos(\x'_{k-1}) \cos(\x'_{k}) \cos(\x'_{k+1}).
\end{align*}
The other two equations hold by symmetry.
\end{proof}

\end{document}